\numberwithin{equation}{section}
\theoremstyle{theorem}
\newtheorem{theorem}{Theorem}[]
\newtheorem{proposition}[]{Proposition}
\newtheorem{definition}[equation]{Definition}
\theoremstyle{remark}
\newtheorem{remark}[equation]{Remark}
\newtheorem{example}[equation]{Example}
\DeclareRobustCommand\widecheck[1]{{\mathpalette\@widecheck{#1}}}
\def\@widecheck#1#2{%
    \setbox\z@\hbox{\m@th$#1#2$}%
    \setbox\tw@\hbox{\m@th$#1%
       \widehat{%
          \vrule\@width\z@\@height\ht\z@
          \vrule\@height\z@\@width\wd\z@}$}%
    \dp\tw@-\ht\z@
    \@tempdima\ht\z@ \advance\@tempdima2\ht\tw@ \divide\@tempdima\thr@@
    \setbox\tw@\hbox{%
       \raise\@tempdima\hbox{\scalebox{1}[-1]{\lower\@tempdima\box
\tw@}}}%
    {\ooalign{\box\tw@ \cr \box\z@}}}
\newcommand{\R}{\mathcal{R}}
\newcommand{\hh}{\mathcal{H}} 
\newcommand{\hn}{\mathcal{H}_N} 
\newcommand{\fg}{\mathcal{F}(G)}
\newcommand{\gs}{\mathfrak{g}^\ast} 
\newcommand{\fa}{\mathfrak{A}}
\newcommand{\pol}{\mathrm{Pol}(\mathfrak{g}^\ast )} 
\newcommand{\pold}{\mathrm{Pol}(D )} 
\newcommand{\qq}{\mathcal{Q}_{\hbar}} 
\newcommand{\ff}{\mathcal{F}}
\newcommand{\cc}{\mathbb{C}}
\newcommand{\I}{1\!\! 1}
\title{{\bf Quantization of bounded symplectic domains associated with compact Lie groups}}
\date{}
\begin{document}

\maketitle

\begin{center}
\vskip -0.05\textheight
\renewcommand{\thefootnote}{\fnsymbol{footnote}}
  Alexey A. \textsc{Sharapov}\footnote{{\tt sharapov@phys.tsu.ru}} 
\renewcommand{\thefootnote}{\arabic{footnote}}
\setcounter{footnote}{0}

\end{center}
\begin{center}
\emph{Physics Faculty, Tomsk State University,}
\emph{Lenin ave. 36, Tomsk
634050, Russia}
\end{center}

\vspace{5mm}

\begin{abstract}
We present a systematic quantization scheme for bounded symplectic domains of the form 
\(D \times G \subset T^\ast G\), where \(D \subset \mathfrak{g}^\ast\) is a bounded region defined by algebraic inequalities and \(G\) is a compact Lie group with Lie algebra \(\mathfrak{g}\).  
The finiteness of the  symplectic volume implies that quantization yields a finite-dimensional Hilbert space, with observables represented by Hermitian matrices, for which we provide an explicit realization.  
Boundary effects necessitate modifications of the standard von Neumann and Dirac conditions, which usually underlie the correspondence principle. Physically, the compact group $G$ plays the role of momentum space, while $\mathfrak{g}^\ast$
corresponds to the (noncommutative) position space of a particle.
The assumption of compact momentum space has profound physical consequences, including the supertunneling phenomenon and the emergence of a maximal fermion density.
\end{abstract}

\section{Introduction}

Quantization of symplectic manifolds is a recurring theme in mathematical physics. Such manifolds naturally serve as phase spaces for classical mechanical systems, and their quantization provides a bridge from classical to quantum mechanics. The most familiar examples arise when the phase space is the cotangent bundle $T^\ast M$ of a configuration space $M$, whose topology may already be nontrivial. An even richer class of examples comes from constrained Hamiltonian dynamics, where the reduced (physical) phase spaces often exhibit nontrivial global geometry. The multitude of approaches to the quantization problem -- commonly labeled as {\it geometric}, {\it deformation}, {\it strict}, {\it Berezin--Toeplitz},  etc. -- makes any comprehensive review impossible here; we therefore refer the reader to the standard monographs \cite{woodhouse1992geometric, KM, fedosov1995deformation,landsman2012mathematical,floch2018brief}.

The present work is motivated by a different class of examples, where the nontrivial topology of the phase space originates in momentum rather than   configuration space. This shift of perspective introduces new features in the quantization problem and reveals novel connections between geometry and quantum theory. Historically, the hypothesis of curved momentum space was first put forward by Born \cite{Born,born1949reciprocity} and subsequently explored by many authors (see, e.g., \cite{Golfand:1962kjf,Tamm,Kadyshevsky:1977mu,Amelino-Camelia:2011lvm,KOWALSKI_GLIKMAN_2013,Franchino-Vinas:2023rcc} and references therein). One of the original motivations was the idea that a nontrivial momentum space geometry could act as an effective cut-off, mitigating or even eliminating ultraviolet divergences in quantum field theory. 
For instance, if momentum space is compact, then its finite volume provides a natural ultraviolet cut-off.

It was soon realized, however, that the hypothesis of curved momentum space is closely related to the idea of noncommutative spacetime, where particle positions $x^\mu$, represented by  noncommuting operators, cannot be measured with arbitrary precision. In this picture, spacetime itself acquires a noncommutative structure, replacing sharp localization with a ``smeared'' notion of a point.
The proposal to suppress ultraviolet divergences through spacetime noncommutativity was first formulated by Snyder \cite{Snyder1,Snyder2}. 

While these early ideas laid the groundwork, renewed interest emerged around the turn of the millennium with the work of Seiberg and Witten \cite{Seiberg_1999}, who studied open strings ending on $D$-branes in a constant $B$-field background. They demonstrated that the low-energy effective theory on a single $D$-brane admits a consistent formulation as a noncommutative gauge theory. This insight sparked an extensive body of research on noncommutative field theory (see, e.g., \cite{Douglas:2001ba, Szabo:2001kg, Hersent:2022gry}) and was accompanied by parallel developments in mathematics, including the theory of deformation quantization, quantum groups, and the broader framework of noncommutative geometry.

By analogy with conventional quantum mechanics, one expects that spacetime noncommutativity in the semiclassical limit can be captured by suitable Poisson brackets.
Applying this semiclassical approximation to noncommutative gauge theories naturally leads to what is known as Poisson gauge theory \cite{kupriyanov2021poisson,Kupriyanov_2021,Kupriyanov:2022ohu,Kupriyanov:2023gjj},  bridging classical geometry  and its quantum deformations. 
Once the spacetime manifold $M$ is endowed with a Poisson structure, it is natural to ask about the phase space of a point particle moving in $M$. Following the correspondence principle, this phase space should be a symplectic manifold of dimension $2\dim M$, with Poisson brackets extending those on $M$. 

A careful analysis of the minimal coupling of a charged point particle to gauge fields was presented in Refs.~\cite{kupriyanov2024symplectic,sharapov2024poisson}. This work shows that the natural candidate for the phase space is not an ordinary cotangent bundle but a {\it symplectic groupoid}
$
\mathcal{G} \rightrightarrows M
$
integrating the Poisson manifold $M$. As a concrete example, let $M=\mathbb{R}^{d-1,1}$ be Minkowski space endowed with linear Poisson brackets 
\begin{equation}\label{LPB}
\{x^\mu, x^\nu\} = f^{\mu\nu}{}_\lambda \, x^\lambda\,,
\end{equation}
so that the dual space naturally acquires the structure of a Lie algebra $\mathfrak{g}$ with structure constants $f^{\mu\nu}{}_\lambda$. The corresponding symplectic groupoid $\mathcal{G}$ over $M$ is then given by the cotangent bundle 
$
T^\ast G
$,
where $G$ is a Lie group integrating the Lie algebra $\mathfrak{g}$. In this construction, the Poisson brackets arise from the canonical symplectic structure on $T^\ast G$, thereby providing a {\it symplectic realization} of the linear Poisson structure on $M$. Using left-invariant vector fields, one identifies $T^\ast G \simeq \mathfrak{g}^\ast \times G$, where the first factor $\mathfrak{g}^\ast$ corresponds to the configuration space $M$, while $G$ represents the momentum space. 
This example explicitly illustrates how spacetime noncommutativity induces a nontrivial geometry on momentum space, curving it according to the underlying Poisson structure. If $\mathfrak{g}$ is compact semisimple, the associated Lie group $G$ and thus the momentum space are compact, in contrast with conventional classical mechanics, where the momentum space $T^\ast_p M$ at $p\in M$ is always linear.

Several mechanical systems on symplectic groupoids $T^\ast G \rightrightarrows \mathfrak{g}^\ast$ have been analyzed in recent years \cite{FSK,MFK,kupriyanov2024classical, basilio2025charged}. These studies indicate that the compactness of momentum space can lead to notable physical consequences. For instance, the velocity of a free (nonrelativistic) particle is bounded from above, provided that the kinetic energy is a smooth function of the momenta. Furthermore, when a particle is subjected to a central potential, it cannot fall to the center, regardless of the strength of the attractive force. Conversely, for any repulsive central force, there exists a finite domain of bounded motion. These results challenge conventional intuition based on flat momentum spaces. 

Taking noncommutative spacetime as a working hypothesis, one is naturally led to consider the quantization of the classical phase space given by the symplectic groupoid $\mathcal{G} \rightrightarrows M$.
In this work, we focus on a particular instance of this general problem, where the noncommutative configuration space $M$ is modeled by a bounded domain $
D \subset \mathfrak{g}^\ast
$ 
in the dual of a compact Lie algebra, endowed with the linear Poisson bracket~(\ref{LPB}). Equivalently, we consider a nonrelativistic particle confined to a finite region of the noncommutative space $\mathfrak{g}^\ast$. The corresponding momentum space is the compact Lie group $G$, and the resulting phase space is diffeomorphic to the direct product 
$
D \times G
$,
carrying  the symplectic structure induced by the embedding
$
D \times G \hookrightarrow  T^\ast G
$. 
Since this phase space has finite symplectic volume, any consistent quantization must yield a finite-dimensional Hilbert space, where observables are represented by finite Hermitian matrices. This leads to discrete spectra and bounded physical quantities, in sharp contrast with those of standard quantum mechanics.

In the following sections, we develop the necessary mathematical framework for this quantization and discuss
some of its elementary physical consequences.

\section{Quantization on a rectangle and a bounded cylinder}\label{RecDom}
To explain the basic idea, let us start with a simple but instructive example of a rectangular domain on the phase plane with canonical coordinates $x$ and $p$. The domain $\R\subset{\mathbb{R}^2}$ is defined by the inequalities:
\begin{equation}
    0< x \leq L\,,\qquad 0\leq p\leq M\,.
\end{equation}
We interpret $x$ as a position coordinate and $p$ as the conjugate momentum. Thus, both the position and momentum spaces are bounded, and the canonical volume (area) of the domain is  $LM$. 
Classical observables are identified with smooth complex-valued functions on $\R$. These form a Poisson algebra with respect to the canonical Poisson brackets $~{\{p,x\}=1}$. The Poisson algebra $C^\infty(\R,\cc)$  contains a dense subalgebra $\fa$, multiplicatively generated by the functions $x$ and $z=e^{\frac{2\pi i p}{M}}$. As a linear space, $\fa$ is spanned by the monomials
\begin{equation}\label{xz}
    x^n z^m\,,\qquad n\in \mathbb{N}_0\,,\quad m\in \mathbb{Z}\,. 
\end{equation}
By the Stone--Weierstrass theorem, every continuous (and hence smooth) function on $\R$ can be uniformly approximated by finite linear combinations of monomials (\ref{xz}). 

Upon quantization, the state space of the system, denoted $\mathcal{H}_N$, must be finite-dimensional. Indeed, the Bohr--Sommerfeld 
quantization rule implies that in the semiclassical limit, 
\begin{equation}\label{BSR}
    N:=\dim \hn \sim \left[ \frac{LM}{2\pi\hbar}\right]\,, \qquad LM\gg \hbar\,. 
\end{equation}
Here, the square brackets stand for the floor function. 
For our purposes, it is convenient to work in the momentum representation and realize $\hn$ as a subspace of the Hilbert space $\hh$ of complex-valued functions $\psi(p)$. The Hermitian inner product is defined in the standard way:
\begin{equation}
    (\psi,\psi')=\frac1M\int_0^Mdp\bar \psi(p)\psi'(p)\qquad \forall \psi,\psi'\in \hh\,.
\end{equation}
Then the functions $\psi_n=z^n$, $n\in \mathbb{Z}$, form an orthonormal basis for $\hh$:
\begin{equation}\label{base}
    (\psi_n,\psi_m)=\delta_{nm}\,.
\end{equation}
By analogy with canonical quantization, we set
\begin{equation}
    \hat{x}=-i\hbar\frac{d}{d p}\,.
\end{equation}
Integration by parts shows that $\hat{x}$ defines a self-adjoint operator provided that the wave functions satisfy the periodic boundary condition 
\begin{equation}\label{PBC}
    \psi(0)=\psi(M)\,.
\end{equation}
However, this is not the only condition we need to impose on admissible states. Indeed, applying the operator $\hat x$ to the basis functions $\psi_n$, we find 
\begin{equation}
    \hat{x}\psi_n=x_n\psi_n\,,\qquad x_n=\frac{2\pi n\hbar}{M}\,,
    \end{equation}
meaning that $\psi_n$ is an eigenstate of the position operator with the definite coordinate value $x_n$. Since the motion of the particle is confined to the interval $(0, L]$, we arrive at  the inequalities
\begin{equation}
    1\leq n\leq N\,, 
\end{equation}
where $N$ is given by Eq.~(\ref{BSR}), together with
\begin{equation}
    \qquad\frac{LM}{2\pi (N+1)}\leq \hbar \leq \frac{LM}{2\pi N}\,.\end{equation}
Hence, only  the eigenstates $\psi_1,\psi_2,\ldots, \psi_{N}$ are admissible. By the superposition principle, these span a state space $\hn$ of dimension $N$, in agreement with Eq.~(\ref{BSR}). 

Instead of quantizing momentum coordinate $p$, we may consider the quantization problem for Laurent polynomials $f(z)$, which are just specific functions of $p$.  If the state space were the whole $\hh$, we could apply the recipe of canonical quantization and define the action of $\hat f$ on $\psi(p)$ as multiplication by $f(z)$. This definition, however, would generally yield a function $f(z)\psi(p)\notin \hn$. Therefore, we introduce the Hermitian projector $P_N: \hh\rightarrow \hn$ and define the operator $\hat f$ by the rule:
\begin{equation}\label{Pz}
    \hat{f}\psi=P_Nf(z)\psi\,.
\end{equation}
It is easy to see that 
     $\hat f^\dagger= \hat{\bar f}$ and the operator $\hat f$ is Hermitian for real-valued functions $f(z)$. It is convenient to introduce the indicator function 
     \begin{equation}
         \chi_N(n)= \left\{\begin{array}{cc}
              1\,, & 1\leq n\leq N ; \\
               0\,,& \mbox{otherwise} \,.
          \end{array}\right.
      \end{equation}     
Then
\begin{equation}
    \widehat{z^m}\psi_n=\chi_N(n+m)\psi_{n+m}\,.
\end{equation}
Because of the projector in the definition (\ref{Pz}), 
\begin{equation}
    \widehat{z^n}\widehat{z^m}\neq \widehat{z^{n+m}}\neq \widehat{z^m}\widehat{z^n}
\end{equation}
in general. However, these equalities hold whenever $nm\geq 0$. In particular, 
$\widehat{z^n}=\hat{z}{}^n $.
We can now extend this quantization rule to general monomials in the $x$'s and $z$'s as
     \begin{equation}\label{quant}
         \widehat{x^n z^m}=\frac12\big(\hat{x}^n\widehat{z^m}+\widehat{z^m}\hat{x}^n\big)
         \end{equation}
         and then to arbitrary elements of $\fa$ by linearity. It follows from the definition that
         \begin{equation}
             [\hat x,\hat z]=i\hbar \widehat{\{x, z\}}\,,
         \end{equation}
         which is a manifestation of the {\it correspondence principle} in Dirac's sense. However, $~{[\hat{z},\hat{\bar z}]\neq 0 }$, in contrast to the Poisson brackets $\{z,\bar z \}=0$. 
         Formula (\ref{quant}) defines a map 
\begin{equation}\label{QM}
\qq^\R: \fa\rightarrow \mathscr{L}(\hn)
\end{equation}from the Poisson algebra of classical observables to the algebra of operators on a finite-dimensional Hilbert space $\hh_N$. We will refer to $\qq^\R$ as a {\it quantization  map}. Since the algebra $\mathscr{L}(\hn)$ is finite-dimensional, the quantization map is far from being injective and much information about $\fa$ is lost on the operator side. This information can hopefully be restored in the classical limit as $\hbar\to 0$ and $N \to \infty$. 
         To trace the classical limit carefully, we make $\fa$ and $\mathscr{L}(\hn)$ into normed spaces. In both cases, the norm is introduced through a positive-definite inner product. For classical observables, we can use the canonical measure on the rectangle $\R$ to define the inner product
         \begin{equation}
             (a,b)_\R=\frac{1}{LM}\int_\R\overline{ a(x,p)} b(x,p)dxdp\,,\qquad \forall a,b\in \fa\,,
         \end{equation}
         and for the operator algebra, we take the {\it normalized} Frobenius inner product 
         \begin{equation}\label{FN}
             (A,B)_N=\frac1{N}\mathrm{Tr}(A^\dagger B)\,,\qquad \forall A,B\in \mathscr{L}(\hn)\,.
         \end{equation}
The corresponding norms are defined as
         \begin{equation}
             \|a\|_\R^2=(a,a)_\R\,,\qquad \|A\|^2_N=({A},{A})_N\,.
         \end{equation}
They satisfy all standard properties, including positive-definiteness\footnote{It should be stressed that the normalized Frobenius norm is {\it not} submultiplicative; rather, it satisfies the inequality $\|A B\|_N\leq \sqrt{N}\|A\|_N \|B\|_N$. Consequently, $\mathscr{L}(\hn)$ is not a Banach  algebra unless $N=1$. Nonetheless, this norm proves to be extremely useful for  studying the asymptotic behaviour of large matrices \cite{vonNeumann1942}, \cite{ CIT-006 }.\label{f1} }. The normalization is chosen such that $\|1\|_\R=1$ and $\|\I\|_N=1$.  Using an orthonormal basis $\psi_1,\ldots,\psi_N\in \hh_N$, one can also express the normalized Frobenius norm in terms of the usual operator norm:
\begin{equation}
    \|A\|^2_N=\frac1N\sum_{k=1}^N\|A\psi_k\|^2\,.
\end{equation}

\begin{proposition}\label{P1} 
The quantization map (\ref{QM}) satisfies the following properties:
\begin{itemize}
\item[i)] Normalization:
$$
\hat 1=\I\qquad (\mbox{\it identity operator on } \hn)\,,
$$
\item[ii)] Reality:
$$
\widehat{\bar a}=\hat{a}^\dagger\,,
$$
    \item[iii)] {Separability:} $$ \lim_{N\rightarrow \infty}\|\hat a\|_N=\|a\|_\R\,,$$ 
\item [iv)]{ von Neumann's condition:} 
$$
\lim_{N\to\infty}\|\hat a\hat b-\widehat{ab}\|_N=0\,,
$$
\item[v)] Dirac's condition: for any pair $a,b\in \fa$, there exists a subspace $\hh^{a,b}\subset \hh_N$ such that
$$
\lim_{N\rightarrow \infty}\frac{\dim\hh^{a,b}}{N}=1
$$
and
$$
\lim_{N\to\infty}\big\|\big((i\hbar)^{-1}[\hat a, \hat b]-\widehat{\{a, b\}}\big)\psi\big\|=0\qquad \forall \psi\in \hh^{a,b}\,.
$$
\end{itemize}
\end{proposition}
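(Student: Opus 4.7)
The plan is to reduce every claim to computations on the monomial basis $\{x^n z^m : n\in\mathbb{N}_0, m\in\mathbb{Z}\}$ of $\fa$, using linearity of $\qq^\R$ and sesqui-linearity of the inner products. The central tool, obtained by combining (\ref{quant}) with $\hat x\psi_k = x_k\psi_k$ (for $x_k = 2\pi k\hbar/M$) and $\widehat{z^m}\psi_k = \chi_N(k+m)\psi_{k+m}$, is the explicit formula
\[
\widehat{x^n z^m}\,\psi_k \;=\; \tfrac{1}{2}\bigl(x_{k+m}^n + x_k^n\bigr)\,\chi_N(k+m)\,\psi_{k+m}.
\]

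Granted this formula, parts (i) and (ii) become immediate: normalization is the case $n=m=0$, while reality reduces---after pulling the adjoint through the symmetrization and invoking self-adjointness of $\hat x$---to the matrix-element check $(\widehat{z^m})^\dagger = \widehat{z^{-m}}$ on $\hn$. For separability (iii), I would first expand $a = \sum c_j a_j$ in monomials, reducing $\|\hat a\|_N^2 = \sum \bar c_j c_k (\hat a_j, \hat a_k)_N$ to the convergence of the Frobenius pairing of two monomial quantizations to its $L^2(\R)$ counterpart. Using the displayed formula, this pairing becomes $\delta_{m_1,m_2}$ times
\[
\frac{1}{N}\sum_{k=1}^N \tfrac{1}{4}\bigl(x_{k+m_1}^{n_1}+x_k^{n_1}\bigr)\bigl(x_{k+m_1}^{n_2}+x_k^{n_2}\bigr)\,\chi_N(k+m_1),
\]
which is recognizable as a Riemann sum for $\frac{1}{L}\int_0^L x^{n_1+n_2}dx = L^{n_1+n_2}/(n_1+n_2+1)$ once $\hbar \sim LM/(2\pi N)$ is inserted, in agreement with $(x^{n_1}z^{m_1}, x^{n_2}z^{m_2})_\R$.

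For (iv) and (v), the same formula yields closed-form expressions for $(\hat a \hat b - \widehat{ab})\psi_k$ and for $([\hat a, \hat b] - i\hbar\widehat{\{a,b\}})\psi_k$ in terms of the step $\delta := 2\pi\hbar/M = O(1/N)$. I would distinguish ``bulk'' indices $k$ lying at distance greater than $|m_1|+|m_2|$ from both endpoints $1$ and $N$ (where every indicator function equals $1$) from the boundary strip. A Taylor expansion of $(x_k + m\delta)^n$ around $x_k$ on bulk indices gives
\[
(\hat a \hat b - \widehat{ab})\psi_k = O(\delta)\,\psi_{k+m_1+m_2}\,,\qquad [\hat a, \hat b]\psi_k = \delta(n_1 m_2 - n_2 m_1)\,x_k^{n_1+n_2-1}\psi_{k+m_1+m_2} + O(\delta^2),
\]
and a direct Poisson-bracket computation from $\{p,x\}=1$ and $z=e^{2\pi ip/M}$ delivers $\{x^{n_1}z^{m_1}, x^{n_2}z^{m_2}\} = \tfrac{2\pi i}{M}(n_2 m_1 - n_1 m_2)x^{n_1+n_2-1}z^{m_1+m_2}$, so that $i\hbar\widehat{\{a,b\}}\psi_k$ matches the leading $\delta$-term of the commutator exactly, with residual $O(\delta^2)$. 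Taking $\hh^{a,b}$ to be the span of the bulk basis vectors (of dimension $N - O(1)$) proves (v) via an $O(1/N)$ operator-norm bound on $(i\hbar)^{-1}[\hat a, \hat b] - \widehat{\{a,b\}}$ restricted there. For (iv), the bulk terms contribute $O(1/N^2)$ each to $\|(\hat a \hat b - \widehat{ab})\psi_k\|^2$ while the $O(1)$ boundary indices contribute $O(1)$ each, summing to $O(1/N)$ in the normalized Frobenius norm squared.

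The main delicate step is the uniform bookkeeping of the boundary strip where $\chi_N$ breaks the semiclassical identities. For each fixed pair of monomials this strip affects only $O(1)$ basis vectors out of $N$, so its contribution is absorbed by the $1/N$ prefactor of the normalized trace norm (or removed from $\hh^{a,b}$ by construction); extending from monomials to arbitrary $a, b \in \fa$ only requires enlarging the strip to accommodate the largest $|m|$ appearing in their finite monomial expansions, a bounded correction that does not affect any of the stated limits.
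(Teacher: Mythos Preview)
Your proposal is correct and follows essentially the same strategy as the paper: explicit action of $\widehat{x^nz^m}$ on the basis $\psi_k$, identification of Riemann sums for (iii), and a bulk/boundary split for (iv)--(v). The only organizational difference is that the paper packages observables in $zx$-ordered form $\hat a=\sum_n\hat z^{\,n}a_n^\hbar(\hat x)$ and, for von Neumann's condition, avoids the explicit boundary count by invoking the mixed-norm inequality $\|AB\|_N\le\|A\|\,\|B\|_N$ together with the preliminary estimate $\|\hat z^{\,n}\hat z^{\,m}-\hat z^{\,n+m}\|_N^2\le|m|/N$; your direct Taylor-plus-counting argument reaches the same $O(1/N)$ conclusion by a slightly more hands-on route.
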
 
Properties ($i$) and ($ii$) follow immediately from the definition of the quantization map.
The proof of ($iii$-$v$), which involves more technical arguments, is deferred to Appendix A. Several remarks are now in order. 
\begin{remark}
    Properties ($i$-$iv$) resemble the defining conditions of {\it strict quantization} \cite{rieffel1994quantization, hawkins2008obstruction, landsman2012mathematical}, but they are not quite the same. The lack of submultiplicativity for the norm (\ref{FN}) prevents us from treating $\mathscr{L}(\hn)$ as a $C^\ast$-algebra (or even as a Banach algebra).  Moreover, replacing the normalized Frobenius norm with the operator norm $\|\cdot \|$ would violate the von Neumann condition, as shown in Appendix~\ref{A}.
    \end{remark}

\begin{remark}
Condition ($iii$) implies that the quantization map $\qq^\R$ is asymptotically injective and induces a family of seminorms $|a|_N:=\|\hat{a}\|_N$ on $\fa$ that separates points. It also ensures that, in the classical limit, the root-mean-square values of a physical observable coincide whether evaluated on the classical phase space or on the quantum spectrum. This is a natural and desirable feature from a physical point of view. Since the Hermitian inner product is completely determined by the associated norm, we further obtain
         \begin{equation}
             \lim_{N\to\infty}(\hat a, \hat b)_N=(a,b)_\R\,,\qquad \forall a,b\in \fa\,.
         \end{equation} 
        \end{remark}
\begin{remark}
Compared to strict deformation quantization and related approaches, Dirac's condition ($v$) holds in a weaker sense -- namely, for ``almost all'' states in the sense of dimension.   The ``exceptional'' states are represented by the elements of the quotient space $~{\hh_N/\hh^{a,b}}$, whose dimension becomes negligible compared to $\hh^{a,b}$ in the classical limit $\hbar\rightarrow 0$.          
\end{remark}
\begin{remark}
    By applying the triangle inequality to the normalized Frobenius norm, one finds that the quantum observables asymptotically commute: 
    \begin{equation}\label{zzN}
      \lim_{N\to \infty}\|\hat a\hat b-\hat b\hat a\|_N=0\,.
  \end{equation}
  \end{remark}
        
\begin{remark}
    The periodic boundary condition (\ref{PBC}) imposed on the wave functions allows us to interpret the momentum space as a circle of circumference $M$. Consequently, all formulas above remain valid and lead to a quantization on the bounded cylinder  $S^1\times I$. 
\end{remark}
       
Observe that the von Neumann and Dirac conditions of Proposition \ref{P1} are formulated differently: the former involves the normalized Frobenius norm, whereas the latter uses the usual operator norm and holds for ``almost all'' states. Nevertheless, it is possible to provide a more uniform treatment of these conditions by viewing $\hh_N=P_N\hh$ not as separate Hilbert spaces, but rather as a family of nested subspaces of a single Hilbert space $\hh$.\footnote{The choice of ambient space $\hh$ for the family $\{\hh_N\}$ is not unique. Instead of the Hilbert space $L^2(S^1)$ of square-integrable functions, one could 
consider, for example, the space of Laurent polynomials in $z=e^{\frac{2\pi i p}{M}}$ or the spaces  $C^k(S^1, \mathbb{C})$ of $k$-times differentiable functions on the circle. In all these cases, the elements of $\hh$ are represented by Fourier series 
$
\psi(p)=\sum_{m\in \mathbb{Z}}a_me^{\frac{2\pi imp}{M}}
$
with $a_m\rightarrow 0$ as $|m|\rightarrow \infty$. } 
Within this framework, the following result provides additional insight into how operator products and commutators behave in the classical limit.
\begin{proposition}\label{P2} Let the position space be the interval $[u,v]$ with $u, v\neq 0$. Then,
    for any $a,b\in \fa$ and $\psi \in\hh $, we have the following limits:
    \begin{equation}
        \lim_{N\to\infty}\big\|(\hat{a}\hat{b}-\widehat{ab})P_N\psi \big\|=0\,,\qquad   \lim_{N\to \infty}\big\|\big({(i\hbar)^{-1}}[\hat{a},\hat{b}]-\widehat{\{a,b\}}\big)P_N\psi \big\|=0\,.
        \end{equation}
\end{proposition}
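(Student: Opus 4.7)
My plan is to reduce both limits to boundary corrections that become negligible as $N\to\infty$, exploiting the hypothesis that the position interval $[u,v]$ is bounded away from zero. By linearity I would first reduce to the case where $a,b\in\mathfrak{A}$ are monomials in the generators $x,z,z^{-1}$. A telescoping identity
\[
\widehat{a_1 a_2\cdots a_k}-\hat a_1\hat a_2\cdots\hat a_k \;=\; \sum_{j}\hat a_1\cdots\hat a_{j-1}\bigl(\widehat{a_ja_{j+1}}-\hat a_j\hat a_{j+1}\bigr)\hat a_{j+2}\cdots\hat a_k,
\]
combined with the uniform-in-$N$ operator-norm bound $\|\hat a_j\|\leq C_{a_j}$ (which holds because the eigenvalues of $\hat x$ lie in the bounded interval $[u,v]$ and $\|\widehat{z^m}\|\leq 1$), reduces matters to pairs of generators; the analogous telescoping handles the commutator.

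The structural formula driving the argument is that for any Laurent polynomial $f(z)$ one has $\hat f\psi = P_N f(z)\psi$, so
\[
(\hat f\hat g-\widehat{fg})\psi \;=\; -P_N\,f(z)\,(I-P_N)\,g(z)\,\psi,
\]
and an analogous identity holds once $\hat x^n$ factors are inserted, because $\hat x$ is diagonal in the Fourier basis $\{\psi_k\}$ and therefore commutes with $P_N$. The crucial feature is that $(I-P_N)\,g(z)\,P_N$ with $g=z^m$ sends $\mathcal{H}_N$ into the span of at most $|m|$ basis vectors $\psi_k$ lying within distance $|m|$ of the two endpoints of the admissible index range; hence $\hat a\hat b-\widehat{ab}$ is a uniformly bounded \emph{boundary} operator of uniformly bounded rank.

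Applying this decomposition to $P_N\psi$ with $\psi=\sum_k a_k\psi_k\in\mathcal{H}$, one bounds $\|(\hat a\hat b-\widehat{ab})P_N\psi\|$ by a constant $C_{a,b,u,v}$ (independent of $N$) times $\bigl(\sum_{k\,\text{near endpoints}}|a_k|^2\bigr)^{1/2}$. The hypothesis $u,v\neq 0$ forces \emph{both} endpoints of the admissible index range to tend to infinity as $N\to\infty$; since $a_k\to 0$ for any $\psi\in\mathcal{H}$, the boundary sum vanishes, establishing the first limit.

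For the Dirac-type limit the same decomposition yields $(i\hbar)^{-1}[\hat a,\hat b]-\widehat{\{a,b\}}=(i\hbar)^{-1}B_N$ with $B_N$ a boundary operator of the same type. A direct computation on generators shows that the bulk part of $(i\hbar)^{-1}[\hat a,\hat b]$ already coincides with $\widehat{\{a,b\}}$: for instance, $[\hat x,\widehat{z^m}]=i\hbar\,\widehat{\{x,z^m\}}$ holds as an exact identity on all basis vectors, so the defect of the correspondence principle lives entirely at the boundary. The main obstacle is the prefactor $\hbar^{-1}$: the norm bound now reads $\hbar^{-1}$ times the boundary slice of $\psi$, so one needs $|a_k|=o(\hbar)=o(1/N)$ at the endpoint indices $k\sim N$. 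This decay is automatic under the natural choices of ambient $\mathcal{H}$ flagged in the footnote (for example $C^1(S^1,\mathbb{C})$, where integration by parts gives $a_k=o(1/|k|)$, or the dense subspace of Laurent polynomials, where $a_k$ is eventually zero), which is the implicit regularity hypothesis on $\mathcal{H}$ that makes the second limit valid.
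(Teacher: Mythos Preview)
Your overall architecture---split the defect into a boundary piece supported near the endpoints and a remainder, then use $|c_k|\to 0$ at the endpoints---is the same as the paper's. But the execution has a genuine gap.

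The central claim that ``$\hat a\hat b-\widehat{ab}$ is a uniformly bounded \emph{boundary} operator of uniformly bounded rank'' is false once $\hat x$ enters. Your structural formula $(\hat f\hat g-\widehat{fg})\psi=-P_Nf(z)(I-P_N)g(z)\psi$ is valid only for Laurent polynomials in $z$; it has no ``analogous identity'' covering the $x$-factors compatible with the symmetrized rule $\widehat{x^nz^m}=\tfrac12(\hat x^n\widehat{z^m}+\widehat{z^m}\hat x^n)$. Concretely, for the pair $(x,z)$ one has $\widehat{xz}-\hat x\hat z=\tfrac12[\hat z,\hat x]=-\tfrac{\Delta x}{2}\hat z$, which is a \emph{full-rank} operator of norm $O(\hbar)$, not a finite-rank boundary projector. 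So the defect decomposes into two pieces: a boundary piece from $\hat z^n\hat z^m\neq\hat z^{n+m}$, and a \emph{bulk} piece of norm $O(\hbar)$ from the shift identity $f(\hat x)\hat z^m=\hat z^mf(\hat x+m\Delta x)$. The paper keeps these separate by passing to the $zx$-ordered form $\hat a=\sum_n\hat z^na_n^\hbar(\hat x)$ and then splitting $P_N\psi=\psi'+\psi''$ into edge and bulk; the edge term is controlled by $c_k\to 0$, the bulk term by the explicit estimate $\|A\psi''\|\le\|\psi''\|\cdot|a_n|_\infty\cdot|nb_m'|_\infty\cdot\Delta x\to 0$. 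Your bound $C_{a,b,u,v}\cdot(\sum_{\text{endpoints}}|a_k|^2)^{1/2}$ captures only the first of these and is therefore not a valid upper bound for $\|(\hat a\hat b-\widehat{ab})P_N\psi\|$.

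Two smaller points. The telescoping identity you wrote is not an identity (try $k=3$: the right-hand side equals $\widehat{a_1a_2}\,\hat a_3+\hat a_1\,\widehat{a_2a_3}-2\hat a_1\hat a_2\hat a_3$, which is not $\widehat{a_1a_2a_3}-\hat a_1\hat a_2\hat a_3$); the correct telescoping reduces to $\widehat{a_j\cdots a_k}-\hat a_j\,\widehat{a_{j+1}\cdots a_k}$, so you never get down to pairs of single generators. And for the Dirac limit, your assertion that the bulk part of $(i\hbar)^{-1}[\hat a,\hat b]$ \emph{coincides} with $\widehat{\{a,b\}}$ is again only true at the level of linear generators: for $a=x^3,\ b=z$ the bulk discrepancy on $\psi_k$ is $-\tfrac12(\Delta x)^2\psi_{k+1}$, nonzero but $O(\hbar^2)$. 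Your diagnosis of the $\hbar^{-1}$ obstruction on edge states is, however, correct and worth keeping.
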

The assumption that the boundary points of $ [u,v]$ are nonzero is not restrictive, since applying an obvious unitary transformation in $\hh$ shifts $\hat x$ to $\hat{x}+d$ for arbitrary $d\in \mathbb{R}$ while leaving the operator $\hat{z}$ intact. The proof of Proposition \ref{P2} is provided in Appendix \ref{A}. 

\begin{example} The first nontrivial case of quantization corresponds to the two-level system with 
$~{\mathcal{H}_2=\mathrm{span}\{ \psi_1,  \psi_2\}\simeq \mathbb{C}^2}$ and
\begin{equation}\label{22}
\begin{array}{c}
    \hat 1=\left(\begin{array}{cc}
        1 &0  \\
        0 & 1
    \end{array}\right)\,,\qquad \widehat{z}=\left(\begin{array}{cc}
        0 &1  \\
        0 & 0
    \end{array}\right)\,,\qquad \widehat{\bar z}=\left(\begin{array}{cc}
        0 &0  \\
        1 & 0
    \end{array}\right)\,,\\[7mm]
\widehat{z^n}=0\,,\qquad n\neq \pm 1\,, \qquad
 \displaystyle   \hat{x}=\frac{2\pi\hbar}{M}\left(\begin{array}{cc}
        1 &0 \\
        0 & 2
    \end{array}\right)\,.
    \end{array}
\end{equation}
The quantization map $\qq^\R: \fa\rightarrow \mathscr{L}(\mathcal{H}_2)$ is surjective and the operators (\ref{22}) span the entire algebra of complex  $2\times 2$ matrices $\mathscr{L}(\hh_2)$.
In contrast to the classical Poisson brackets $\{z,\bar z\}=0$, 
\begin{equation}
    [\widehat z,\widehat{\bar{z}}]=\left(\begin{array}{cc}
        1 &0  \\
        0 & -1
    \end{array}\right)\neq 0\,.
    \end{equation}
    
\end{example}

We see that the compactness of the momentum space has far-reaching physical implications. Most importantly,  the Hilbert space of states  becomes finite-dimensional whenever the particle is confined to a bounded interval. Consequently, the spectrum of the position operator is finite and equidistant: there are a finite number of spatial points where the particle can be observed through measurements.  Below, we consider two more interesting consequences of compactness. 

\subsection{Supertunneling} 
So far we have discussed a quantum particle confined within a connected, bounded interval.  Now, let us consider a configuration space consisting of two disjoint segments, $I_1=[0, x_1)$ and $I_2=(x_2, x_3]$, separated by the intermediate segment $I=[x_1,x_2]$. The corresponding phase space is given by the disjoint union $$I_1\times [0,M]\bigsqcup I_2\times [0,M]\,.$$ Suppose that, initially, the particle is located in  $I_1$ with probability $1$. What, then, is the probability of finding the particle in the segment $I_2$ after some time $t$? 

To simplify the analysis, we assume that the segments $I_1$ and $I_2$ are sufficiently small to support only a single quantum state, i.e., 
$$
x_1=x_3-x_2=\frac{2\pi \hbar}{M}\,.
$$
Choose $n\in \mathbb{N}$ such that $2\pi \hbar n/M\in I_2$.  By construction, we have $n\geq|I|M/2\pi \hbar$. The initial state is then described by the wave function $\psi_0(p)=1$, and the wave function $~{\psi_{n}(p)=e^{\frac{2\pi i n p}{M}}}$ corresponds to the particle being located at $I_2$. These wave functions  span the two-dimensional state space of the system.  The classical Hamiltonian of a free particle is defined by its kinetic energy, which is assumed to be a smooth function of the particle's momentum:
\begin{equation}\label{HH}
     H(p)=\sum_{k\in \mathbb{Z}} H_k z^k\,, \qquad z=e^{\frac{2\pi i p}{M}}\,.
\end{equation}
Since 
\begin{equation}
\|H\|^2_\R=\sum_{k\in \mathbb{Z}}|H_k|^2<\infty\,,
\end{equation}
it follows that $H_k\to 0$ as $k\to \pm \infty$. For the two-state system under consideration, the quantum Hamiltonian is given by the operator
\begin{equation}\label{HHH}
    \hat H= H_0+H_n \widehat{z^n} +\bar H_n \widehat{z^{-n}}\,,
\end{equation}
and the evolution operator takes the form
\begin{equation}
    \hat U(t)=e^{-\frac{it}{\hbar}\hat H}=e^{-\frac{itH_0}{\hbar}}\left(\cos(\omega_n t) \I -i \frac{\sin(\omega_n t)}{|H_n|} (\hat H-H_0)\right)\,, \qquad \omega_n =\frac{|H_n|}{\hbar}\,.
    \end{equation}
The transition probability is given  by the standard formula
\begin{equation}
    P_{0\rightarrow n}(t)=|\langle \psi_n|\hat U(t) |\psi_0\rangle|^2=\sin^2(\omega_n t)\,.
\end{equation}
During time evolution, the system oscillates between the two possible states, with the probability oscillating accordingly. 
If the distance $|I|$ between the segments $I_1$ and $I_2$ remains fixed, then in the decompactification limit $M\rightarrow \infty$, it follows that  $n\rightarrow \infty$ and $\omega_n \rightarrow 0$. As a result, no transition between the two states is possible ($P_{0\rightarrow n}=0$) unless the momentum space is compact. This result is consistent with physical intuition: neither classical nor quantum particles can penetrate an infinitely high potential barrier of finite thickness. Such tunneling, however, becomes possible for a quantum particle with compact momentum space.  This explains why we refer to this phenomenon -- a nonzero transition amplitude -- as {\it supertunneling}. The qualitative picture does not change if we broaden the intervals $I_1$ and $I_2$ to accommodate additional particle states.  This situation bears some resemblance to relativistic quantum field theory. The inability to localize a relativistic particle within a finite region of space ultimately stems from the existence of a maximal velocity, just as the supertunneling phenomenon arises from the upper bound on the particle's momentum.

It is also instructive to compare this result with the transition amplitude between the states $\psi_0$ and $\psi_n$, assuming that the configuration space is the entire real line $\mathbb{R}$, with no gaps. For simplicity, we focus on a particular Hamiltonian of the form (\ref{HH}), namely,  
\begin{equation}
    H=E\left(1- \frac{z+\bar z}2\right)\,,\qquad z=e^{\frac{2\pi i p}{M}}\,,
\end{equation}
where $E$ is  a constant with units of energy. On the entire real line, the evolution operator is expressed as 
\begin{equation}
    \hat U(t)=e^{-\frac{it}{\hbar}\hat H} =e^{-it\omega}\Big[J_0(t\omega)+\sum_{k=1}^\infty i^kJ_k(t\omega)\big( {z^k}+
    {z^{-k}}\big)\Big]\cdot \,,
\end{equation}
where $\omega=E/\hbar$ and $J_k(w)$ are the Bessel functions of the first kind. The probability of transition is given by 
\begin{equation}
    P_{0\rightarrow n}(t)=J^2_n(t\omega)\,.
    \end{equation}
Using the asymptotic form of the Bessel functions, we find 
\begin{equation}
    P_{0\rightarrow n}(t)\sim \frac{2}{\pi t\omega}\cos^2 \left( t\omega-\frac{n\pi}{2}-\frac{\pi}{4} \right)\,,\qquad t\omega \gg n^2\,.
\end{equation}
As before, the probability oscillates, but its amplitude gradually decreases over time, and the oscillation frequency $\omega$ is independent of the spatial separation between $x_1$ and $x_2$. Thus, whenever the momentum space is compact, the particle has a nonzero probability of jumping from the origin $x=0$ to a point in $I_2$, except at certain times. 
To highlight the contrast with conventional quantum mechanics, let us set $E=M^2/4\pi^2 m$, where $m$ is the mass of the particle.  Then 
\begin{equation}\label{FP}
H=\frac{M^2}{4\pi^2m}\Big(1-\cos\Big(\frac{2\pi p}{M}\Big)\Big )=\frac{p^2}{2m}+\mathcal{O}\Big(\frac{1}{M^2}\Big)\,.
\end{equation}
For sufficiently small momenta ($|p|\ll M$), the particle obeys the standard quadratic energy-momentum relation, which becomes exact in the decompactification limit $M\rightarrow \infty$. In this limit, $\omega\rightarrow \infty$ and $P_{0\rightarrow n}\rightarrow 0$. Again, this agrees with the conventional quantum picture: the probability of a particle transitioning from a given point to any finite interval vanishes, since the wave function  becomes spread uniformly along the real line \cite{hegerfeldt1998instantaneous}. 

\subsection{Degenerate Fermi gas}

As another example,  consider the gas of free Fermi particles confined to the interval $(0,L]$ at zero temperature. All particles are assumed to be scalar, noninteracting,  and indistinguishable.  The dynamics of each particle are governed by the Hamiltonian (\ref{FP}). As before, let $N$ denote the dimension of the single-fermion state space $\hn$, and let $ \mathcal{N}$ represent the total number of fermions. In the basis $\big\{\psi_n(p)=z^n\big \}_{n=1}^N$, the quantum Hamiltonian takes the form of a tridiagonal Toeplitz matrix\footnote{Curiously, the matrix is proportional to the Cartan matrix of type $A_N$.}:
\begin{equation}\hat H=\frac E2\left(
\begin{array}{cccccc}
     2&-1&0&\cdots&0  \\
     -1&2& -1&\ddots&\vdots\\
     0&-1&2&\ddots &0\\
     \vdots&\ddots&\ddots&\ddots&-1\\
     0&\cdots &0&-1&2
\end{array}\right)\,,\qquad E=\frac{M^2}{4\pi^2 m}\,.
\end{equation}
The eigenvalues of this matrix determine the energy spectrum of a single fermion. To compute the spectrum, we introduce the $N\times N$ matrix $\hat T$ via
\begin{equation}\label{T}
    \hat H= E\big (\I- \frac12 \hat T\big)
\end{equation}
and set $P_N (\lambda):=(-1)^N\det(\hat{T}-\lambda \I)$. Expanding the last determinant along the first row yields the relation 
\begin{equation}
    P_{N+1}(\lambda )=\lambda P_N(\lambda )-P_{N-1}(\lambda )\,,
\end{equation}
in which one may recognize the recurrence relation for the Chebyshev polynomials of the second kind on the interval $[-2,2]$. An alternative definition is 
\begin{equation}\label{CP}
    P_n(2\cos \phi)=\frac{\sin\, (n+1)\phi}{\sin\,\phi}\,.
\end{equation}
In particular,
\begin{equation}
P_0=1\,,\qquad P_1=\lambda\,,\qquad P_2=\lambda^2-1\,,\qquad P_3=\lambda^3-2\lambda\,,\qquad P_4=\lambda^4-3\lambda^2+1\,.
\end{equation}
The roots of the $n$-th Chebyshev polynomial follow directly from Eq. (\ref{CP}):
\begin{equation}\label{roots}
    \lambda_k=2\cos\, \frac{k\pi}{n+1}\,\qquad k=1,2,\ldots, n\,.
\end{equation}
This leads to the expression 
\begin{equation}
    P_n(z)=\prod_{k=1}^n\Big(z-2\cos\,\frac{k\pi}{n+1}\Big)\,.
\end{equation}
Using Rel. (\ref{T}), it is easy  to express the energy levels of a free  particle through the roots (\ref{roots}) of Chebyshev's polynomials\footnote{An interesting discussion of Cartan matrices and Chebyshev polynomials can be found in \cite{damianou2011characteristic}.}:
\begin{equation}
    E_k=E\big(1-\cos k\theta_N\big)\,, \qquad \theta_N=\frac{\pi}{N+1}\,,\qquad k=1,2,\ldots, N\,.
\end{equation}

The Pauli exclusion principle sets an upper bound on the number of identical fermions that can be accommodated within a segment $(0, L]$, namely 
$\mathcal{N}\leq N=LM/2\pi \hbar$.  This gives rise to the {\it maximal density} $n_{\mathrm{max}}$: if $n=\mathcal{N}/L$ denotes the density of the 1D Fermi gas, then  
\begin{equation}
    n\leq n_{\mathrm{max}}=\frac{N}{L}=\frac{M}{2\pi \hbar}\,,\qquad \mbox{and}\qquad \frac{n}{n_{\mathrm{max}}}=\frac{\mathcal{N}}{N}\leq 1\,.
\end{equation}
The total energy of the degenerate Fermi gas is given by the sum 
\begin{equation}\label{Uex}
\begin{array}{c}
  \displaystyle  U(\mathcal{N})=\sum_{k=1}^{\mathcal{N}}E_k=E\left(\mathcal{N}+1-\mathrm{Re}\sum_{k=0}^{\mathcal{N}} e^{ik\theta_N}\right)=E\left(\mathcal{N}+1-\mathrm{Re}\frac{1-e^{i(\mathcal{N}+1)\theta_N}}{1-e^{i\theta_N}}\right)\\[7mm]
  \displaystyle  =E(\mathcal{N}+1)-\frac E2\left(1+\frac{\cos {\mathcal{N}\theta_N} -\cos(\mathcal{N}+1)\theta_N}{1-\cos{\theta_N}}\right)\\[7mm]\displaystyle =E\left(\mathcal{N}+\frac12\right) -\frac E2\left( \cos{\mathcal{N}\theta_N}+\sin\mathcal{N}\theta_N\,\mathrm{ctg}\left(\frac{\theta_N}{2}\right)\right)\,.   
    \end{array}
\end{equation}
In particular, the maximum energy that can be stored in the Fermi gas is equal to   $$U_{\mathrm{max}}=U({N})=E{N}\,.$$ For the case
$N\gg \mathcal{N} \gg 1$, the energy can be approximated as
\begin{equation}
    U(\mathcal{N})\approx E\mathcal{N}+\frac E4\left(\frac{\mathcal{N}}{N}\right)^2-\frac{EN}{\pi}\sin\frac{\pi \mathcal{N}}{N} \approx \frac{\pi^2}6 \frac{E\mathcal{N}^3}{N^2}=\frac{\pi^2}{6}\frac{\hbar^2\mathcal{N}^3}{m L^2}\,,
\end{equation}
from which the degeneracy pressure of the dilute fermion gas follows as
\begin{equation}\label{n3}
    P=-\frac{\partial U}{\partial L}\approx\frac{\pi^2\hbar^2}{3m} n^3\,. 
\end{equation}
Thus, the pressure scales as the cube of density, as expected for the 1D Fermi gas with an approximately quadratic dispersion relation (\ref{FP}). In the general case, $P$ is a function of both the concentration $n$ and the number of available states $N$ (or equivalently, the system size $L$). In the thermodynamic limit $N\rightarrow \infty$, however, the explicit dependence on $L$ disappears, and the pressure reaches its maximum at the highest possible concentration (see Eq. (\ref{Pres}) below).  

It is instructive  to compare these results with semiclassical computations based on the Hamiltonian (\ref{FP}).  In the semiclassical approximation, the total energy of the Fermi gas is given by the integral
\begin{equation}\label{U}
\begin{array}{c}
  \displaystyle  U=\frac{1}{2\pi\hbar}\int^{p_F}_{-p_F}H(p)dpdx=\frac{L}{2\pi\hbar}\int_{-p_F}^{p_F}\frac{M^2}{4\pi^2m}\left(1-\cos\left(\frac {2\pi p}{M}\right)\right)dp\\[7mm]
  \displaystyle  =\frac{M^2L}{4\pi^3\hbar m}
\left( p_F-\frac{M}{2\pi}\sin\left(\frac{2\pi p_F}{M}\right)\right)  \,,
\end{array}
\end{equation}
where the Fermi momentum $p_F$ is determined by the Bohr--Sommerfeld condition:
\begin{equation}
    \mathcal{N}=\frac{1}{2\pi\hbar}\int_{-p_F}^{p_F}dpdx=\frac{p_FL}{\pi\hbar}\,.
\end{equation}
Substituting this into (\ref{U}), we get
\begin{equation}
    U=\frac{M^2\mathcal{N}}{4\pi^2 m}
 -\frac{LM^3}{8\pi^4\hbar m}\sin\left(\frac{\pi \mathcal{N}}{N}\right)\,, 
 \end{equation}
 from which the degeneracy pressure follows:
\begin{equation}\label{Pres}
\begin{array}{rcl}
    P=\displaystyle -\frac{\partial U}{\partial L}&=&\displaystyle \frac{\partial}{\partial L}\left[\frac{LM^3}{8\pi^4\hbar m}\sin\left(\frac{2\pi^2\hbar\mathcal{N}}{LM}\right)\right]\\[5mm]
   &=&\displaystyle \frac{M^3}{8\pi^4\hbar m}\left[\sin\Big(\frac{\pi\mathcal N}{N}\Big) -\Big(\frac{\pi\mathcal N}{N}\Big) \cos\Big(\frac{\pi\mathcal N}{N}\Big) \right]\,.
    \end{array}
\end{equation}
Setting $\mathcal{N}=N$ yields the maximum pressure $P_{\mathrm{max}}=M^3/8\pi^3\hbar m$. For $\mathcal{N}\ll N$ or in the decompactification limit $M\to\infty$,  we recover the cubic law (\ref{n3}). 

\section{More general symplectic domains}

A bounded cylinder $S^1\times I$, discussed in Sec. \ref{RecDom}, may be viewed as a bounded domain in the cotangent bundle of the abelian group $U(1)$. Using elementary harmonic analysis on $U(1)$, we constructed a quantization of the phase space $S^1\times I$, leading to a family of finite-dimensional algebras of Hermitian operators. In this section,  we generalize this quantization technique to bounded domains associated with arbitrary compact Lie groups.

\subsection{Classical observables} 

Let $T^\ast G\simeq G\times \gs$ be the cotangent bundle of a compact Lie group $G$ and let $D$ be an open domain in $\gs$. If $\dim G=d$, then $G\times D\subset T^\ast G$ is an open submanifold of dimension $2d$ with symplectic structure obtained by restricting the canonical symplectic form on $T^\ast G$.  We regard the direct product $M=G\times D$ as the phase space of a classical mechanical system, with $D$ interpreted as the position space and  $G$ as the momentum space. The corresponding Poisson brackets are defined by
\begin{equation}
    \{f, h\}=0\,,\qquad \{f, x_a\}=\ell_af\,,\qquad \{x_a,x_b\}=-f_{ab}^cx_c\,.
\end{equation}
Here $f, h\in C^\infty(G)$, $x_a$ are Cartesian coordinates on $\gs$, and  $\ell_a$ form the dual basis of left-invariant vector fields on $G$, satisfying the commutation relations $[\ell_a,\ell_b]=f_{ab}^c\ell_c$. In the case where  $G$ is semisimple, we choose the basis $\ell_a$ in such a way that the Cartan--Killing metric $g_{ab}=f^c_{ad}f^d_{bc}$ is given by the negative Kronecker's $-\delta_{ab}$.

Denote by $\fg$ the algebra of representative functions on $G$. By definition,  this algebra is spanned by the matrix elements of unitary irreducible representations of $G$. If $\hat G$ denotes the set of all equivalence classes of irreducible representations of $G$, then a generic element of $\fg$ can be written as
\begin{equation}
    f(g)=\sum_{[\pi]\in \hat G}f_\pi^{ij}\pi_{ij}(g)\,,
\end{equation}
where $\big(\pi_{ij}(g)\big)$ are the matrix elements of a representation $\pi: G\rightarrow GL(V)$ and only finitely many coefficients $f_\pi^{ij}\in \mathbb{C}$ are nonzero. Let $\pol$ denote the algebra of polynomial functions on $\gs$, and $\pold$ its restriction to the domain $D\subset \gs$. 
The Poisson algebra of smooth complex-valued functions on $M$ contains a subalgebra $\fa=\fg\otimes \pold$, whose elements are given by finite linear combinations of monomials 
\begin{equation}
    f(g)x_{a_1}\cdots x_{a_n} \,,\qquad n\in \mathbb{N}_0\,,\quad f\in \fg\,.
    \end{equation}
We identify $\fa$ with the {\it Poisson algebra of classical observables}. 

By the Stone--Weierstrass theorem, $\fa$ is dense in the space $C^\infty(M, \mathbb{C})$ with respect to the supremum norm $|\cdot|_\infty$. 
For our subsequent calculations, it will be more convenient to introduce another norm on
$\fa$, which we call the {\it 
$R$-norm}. It is defined as follows. Using multi-index notation, we can write each element of $\fa$ as
\begin{equation}\label{MI}
    a=\sum_{\bf{n}}a_{\bf n}(g)x^{\bf{n}}\,,
\end{equation}
where ${\bf n}=(n_1,\ldots,n_d)$ and $x^{\bf n}:=x_1^{n_1}\cdots x^{n_d}_d$. Given $R>0$, we define 
\begin{equation}\label{RN}
|a|_R=\sum_{\mathbf{n}}|a_{\mathbf{n}}|_{\infty}R^{|\mathbf{n}|}\,,
\end{equation}
where $|\mathbf{n}|:=n_1+\cdots+n_d$. The $R$-norm satisfies all the necessary properties (positivity, homogeneity, subadditivity, and submultiplicativity), thereby turning $\fa$ into a normed commutative algebra.   Suppose the domain $D\subset \gs$ is bounded and entirely  contained in the ball of radius $R$ centered at $0\in \gs$. The minimal $R$ with this property will be called the {\it radius of the domain $D$}. Clearly, for any such domain, $|a|_\infty\leq |a|_R$. 

In addition to the norm topology, the algebra $\fa$ enjoys a bifiltration determined by the natural filtrations of $\pold$ and $\fg$. The filtration 
in $\pold$ is the standard  filtration by monomial degree:
\begin{equation}
    \mathrm{Pol}^n(D)\subset \mathrm{Pol}^{n+1}(D)\,,\qquad \mathrm{Pol}^n(D) \, \mathrm{Pol}^m(D)\subset \mathrm{Pol}^{n+m}(D)\,,
    \end{equation}
where the subspace $\mathrm{Pol}^n(D)$ is spanned by all polynomials of  degree less than or equal to $n$. 

To introduce the second filtration, we need some additional terminology. Recall that all finite-dimensional (not necessarily irreducible) representations of a compact Lie group $G$ (considered up to equivalence) form a semiring $\mathfrak{R}(G)$, with direct sum and tensor product as addition and multiplication, respectively.
If $G$ is simply connected (and hence semisimple), then each representation $\pi\in \mathfrak{R}(G)$ both defines and is defined by a finite-dimensional representation $\dot \pi$ of the corresponding Lie algebra $\mathfrak{g}$. The latter, in turn, is completely determined by its character:
\begin{equation}
    \mathrm{ch}(\dot \pi)=\sum_{\lambda\in P} m_\lambda e^{\lambda}\,.
\end{equation}
Here $P=P(\dot \pi)$ is the system of weights of the representation $\dot \pi$, and $m_\lambda$ is the multiplicity of the weight $\lambda\in P$. Recall that the weights of $\mathfrak{g}$ form a lattice $\Lambda\subset \mathfrak{h}^\ast$ in the dual space to the Cartan subalgebra $\mathfrak{h}\subset \mathfrak{g}$, and that the dominant weights constitute a sublattice $\Lambda^+\subset \Lambda$. The additive and multiplicative properties of the character,
\begin{equation}
    \mathrm{ch}(\dot \pi_1\oplus \dot \pi_2)=\mathrm{ch}(\dot \pi_1)+\mathrm{ch}(\dot \pi_2)\,,\qquad   
    \mathrm{ch}(\dot \pi_1\otimes  \dot \pi_2)=\mathrm{ch}(\dot \pi_1)\cdot \mathrm{ch}(\dot \pi_2)\,,
\end{equation}
allow us to regard the character as a homomorphism from the semiring $\mathfrak{R}(G)$ to the group ring $\mathbb{Z}[\Lambda]$ of the free abelian group $\Lambda$. 
Define the {\it Casimir length} $|\pi|\geq 0$ of a representation $\pi\in \mathfrak{R}(G)$ by the formula
\begin{equation}\label{cl}
    |\pi|^2= \max_{\lambda\in P(\dot\pi)} \langle \lambda,\lambda+2\rho\rangle\,.
\end{equation}
Here, the angle brackets denote the inner product in $\mathfrak{h}^\ast$ induced by the Killing form, and $\rho$ is the sum of the fundamental weights (the generators of the lattice $\Lambda^+$), often called the Weyl vector.  It is known that the set of weights $P(\dot\pi)$ is invariant under the action of the Weyl group $W$ and that each weight of $\Lambda$ is conjugate to a dominant weight from $\Lambda^+$.  Since $\langle \lambda,\rho\rangle>0$ for any $\lambda\in \Lambda^+$, the maximum in (\ref{cl}) is attained on  dominant weights. By restricting to the dominant weights, one can easily see that
\begin{equation}
    |\pi_1\otimes \pi_2|\leq |\pi_1|+|\pi_2|\,.
\end{equation}
This induces a filtration on the semiring $\mathfrak{R}(G)$ defined by the family of additive subgroups
\begin{equation}\label{t}
    \mathfrak{R}_l(G)=\big\{\pi \in \mathfrak{R}(G)\; \big|\;  |\pi|\leq l\;\big \}\,,\qquad l\in \mathbb{R}_{\geq 0}\,.
\end{equation}
By definition, 
$$
    \mathfrak{R}_{l_1}(G)\subset \mathfrak{R}_{l_2}(G)\quad \mbox{whenever}\quad l_1\leq l_2\,,
    $$
    and
    $$
    \mathfrak{R}_{l_1}(G)\,\mathfrak{R}_{l_2}(G)\subset \mathfrak{R}_{l_1+l_2}(G) \,,\qquad \mathfrak{R}(G)=\bigcup_{l \geq 0}\mathfrak{R}_l(G)\,.
    $$
Thus, the family $\mathfrak{R}_l(G)$ defines an exhaustive multiplicative filtration on $\mathfrak{R}(G)$, indexed by the abelian semigroup $(\mathbb{R}_{\geq 0}, +)$. 

This filtration induces a multiplicative filtration on the algebra of representative functions $\fg$. Define the subspace of {\it Casimir length at most} $l$ as 
\begin{equation}\label{FF}
    \mathcal{F}_l(G)=\mathrm{span} \big\{\pi_{ij}(g)\in \fg\;\big|\; \pi\in \mathfrak R_l(G)\;\big\}\,.
\end{equation}
Then 
$$
 \mathcal{F}_{l_1}(G) \,\mathcal{F}_{l_2}(G)\subset  \mathcal{F}_{l_1+l_2}(G)\,, \qquad \bigcup_{l\geq 0}\mathcal{F}_l(G)=\fg\,.
 $$
Associated with the quadratic Casimir element $C_2=-\delta^{ab}t_a t_b$ is the Laplace operator \begin{equation}
\mathbb{L}=-\delta^{ab}\ell_a\ell_b\end{equation}
on scalar functions.  It is well known that the eigenfunctions of $\mathbb{L}$ are precisely  the matrix elements of irreducible representations of $G$. Furthermore,
\begin{equation}
    \mathbb{L}\, \pi_{ij}=|\pi|^2\pi_{ij}\,,\qquad \forall\, [\pi] \in \hat G\,.
\end{equation}
This explains the terminology ``Casimir length'' for $|\pi|$. Each eigenvalue $|\pi|^2$ has finite multiplicity, and the classical Weyl law states that 
 \begin{equation}\label{WR}
     \dim \mathcal{F}_l(G)=\frac{\mathrm{Vol}(G) }{(4\pi)^{d/2}\Gamma(\frac d2+1)} l^d+\mathcal{O}(l^{d-1})\,,
 \end{equation}
where $d=\dim G$ (see \cite{Minakshisundaram_Pleijel_1949}). Thus, Weyl’s law shows that the filtration grows polynomially in $l$, which will provide effective control over the size and complexity of $\fg$ in the process of quantization.

Let us denote 
$$
\fa^n_l=\mathcal{F}_l(G)\otimes \mathrm{Pol}^n(D)\,,\qquad \fa^n=\fg\otimes \mathrm{Pol}^n(D)\,,\qquad \fa_l=\mathcal{F}_l(G)\otimes \mathrm{Pol}(D)\,.
$$
Combining both filtrations, we introduce  the {\it total filtration}: 
\begin{equation}\label{TF}
\fa^{(t)}=\bigoplus_{n+l=t}\fa_l^n\,,\qquad \fa^{(t)}\subset \fa^{(t')}\quad\mbox{whenever}\quad t\leq t'\,.
\end{equation}
This total filtration will be especially useful in our subsequent constructions. 

\subsection{Prequantization}\label{pre-Q}
Consider the pre-Hilbert space $\hh=\fg$ of representative functions equipped with the Hermitian inner product
\begin{equation}
    (f,h)=\int_G\overline{ f(g)} h(g)dg\qquad \forall f, h\in \hh\,,
\end{equation}
where $dg$ denotes the normalized Haar measure on $G$. By the Peter--Weyl theorem, $\hh$ is dense in the Hilbert space $L^2(G, dg)$. Let $\mathscr{L}(\hh)$ denote the algebra of operators on $\hh$.  For any real number $\hbar\geq 0$, define a $\mathbb{C}$-linear  map 
$\qq:\fa\rightarrow \mathscr{L}(\hh)$ by the relations:
\begin{equation}\label{PQ}
\begin{array}{c}
   \qq(f)\psi=f\cdot \psi\,,\qquad  \qq(x_a)\psi=-i\hbar\ell_a\psi \,,\qquad \forall f\in \fg\,,\quad\forall \psi\in\hh\,,\\[5mm]
 \displaystyle    \qq(fx_{a_1}\cdots x_{a_n})=\frac{1}{(n+1)!}\sum_{k=0}^n\sum_{\sigma\in S_{n}} \sigma\big( \qq(x_{a_1})\cdots \qq(x_{a_k})\qq(f)\qq(x_{a_{k+1}})\cdots \qq(x_{a_n})\big)\,.
 \end{array}
\end{equation}
where the inner summation runs over all permutations of the symbols $x_{a_1},\ldots, x_{a_n}$.  We will refer to $\qq$ as the {\it prequantization map}. It follows immediately from the definition that
\begin{equation}\label{QQ}
   [\qq(x_a), \qq(x_b)]=-i\hbar f_{ab}^c \qq(x_c),  \quad [\qq(x_a),\qq(f)]=-i\hbar\qq(\ell_af)\,,\ \quad [\qq(f),\qq(h)]=0
\end{equation}
 for all $f,h\in \fg$.  The first commutator shows that the operators $\qq(x_a)$ form a Lie algebra which is isomorphic to $\mathfrak{g}$ for $\hbar>0$. Then the second and third commutators  generate a trivial  extension  of  $\mathfrak{g}$ by the $\mathfrak{g}$-module $\fg$. Observe that the action of $\mathfrak{g}$ on $\fg$ respects the filtration by the Casimir length, so that we can speak of an infinite-dimensional  Lie algebra $~{\mathcal{L}=\mathfrak{g}\subset\hspace{-1em}+\fg}$ filtered by the family of finite-dimensional subalgebras $~{\mathcal{L}_l=\mathfrak{g}\subset\hspace{-1em}+\mathcal{F}_l(G)}$. 
The prequantization map $\qq$ therefore defines a unitary representation of the universal enveloping algebra $\mathcal{U}(\mathcal{L})$ in the pre-Hilbert space $\hh$. This allows us to identify the space of prequantum observables $\qq(\fa)$ with the quotient algebra $\mathcal{U}(\mathcal{L})/\mathcal{J}$, where the two-sided ideal $\mathcal{J}$ is generated by the relations 
 $$
 fh-f\otimes h=0\,,\qquad \forall f,h\in \fg\,.
 $$

Another possible interpretation of the commutation  relations (\ref{QQ}) is provided by the notion of Lie--Rinehart pair 
\cite{LR}. The Poisson algebra of classical observables $\fa$ contains the subspace $\fa^0\oplus\fa^1$ consisting of functions that are at most linear in $x_a$. The latter carries a natural structure of the Lie--Rinehart pair $(\mathfrak{M}, \mathfrak{L})$, where the (infinite-dimensional) Lie algebra $\mathfrak{L}$ is given by $\mathfrak{A}^1$, and the $\mathfrak{L}$-module $\mathfrak{M}$ is identified with $\mathfrak{A}^0$. Both the Lie bracket on $\mathfrak{L}$ and the (left) action of $\mathfrak{L}$ on $\mathfrak{M}$ are defined by the Poisson bracket, while the left action of $\mathfrak{M}$ on $\mathfrak{L}$ is given by the dot  product  of functions. In geometrical terms, this Lie--Rinehart pair corresponds to the tangent Lie algebroid of $G$. 
With this interpretation,  the quantization map $\qq$ defines a homomorphism from the universal enveloping algebra $\mathcal{U}(\mathfrak{M},\mathfrak{L})$ of the Lie--Rinehart pair to the algebra of differential operators on $G$.

 The next statement, for which the proof is left to the reader, represents a form of the correspondence principle. 
\begin{proposition} \label{P3} The following statements hold:
\begin{enumerate}
 \item The prequantization map is Hermitian, meaning that $\qq(\bar a)=\big(\qq(a)\big)^\dagger$.
    \item The prequantization map is injective, i.e., $\ker(\qq)=0$.
    \item The image of the prequantization map, $\qq(\fa) $, is a subalgebra of $\mathscr{L}(\hh)$.   
    \item For any $a\in \fa^n$ and $b\in\fa^m$, the following $\ast$-product is defined:
\begin{equation}\label{star}
    a\ast b:=\qq^{-1}\big(  \qq(a)\qq(b)\big)= \sum_{k=0}^{n+m}  \hbar^k D_k(a,b)\,,
\end{equation}
       where 
       $$
        D_0(a,b)=ab\,,\qquad D_1(a,b)=\frac i2\{a,b\}\,,\qquad\mbox{and}\qquad    D_k(a,b)\in\fa^{n+m-k}\,.  
        $$
   If $a$ and $b$ do not depend on $\hbar$, then neither do the elements $D_k(a,b)$. 
   \item The $\ast$-product respects the total filtration (\ref{TF}), that is,
   $
       \fa^{(t)}\ast \fa^{(s)}\subset \fa^{(t+s)}
   $.
   \end{enumerate}
   Thus, the star-product (\ref{star}) defines a deformation quantization of the dense Poisson subalgebra $\fa\subset C^\infty(M,\cc)$.  \end{proposition}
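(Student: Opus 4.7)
The plan is to establish (1) and (2) by direct operator-theoretic arguments on the compact group $G$, and then to derive (3)--(5) in a single pass by a Poincar\'e--Birkhoff--Witt style reordering of operator products in $\mathscr{L}(\hh)$, controlled by the bifiltration (\ref{TF}). The main obstacle is the bookkeeping: one has to show that each commutator step from (\ref{QQ}) consistently trades one unit of polynomial degree for one explicit power of $\hbar$ while leaving the Casimir length of the function factors unchanged. Once this is in place, both the $\hbar$-expansion (\ref{star}) and the filtration inclusion in (5) fall out of the same computation.

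\textbf{Hermiticity and injectivity.} For (1), multiplication operators $M_f$ satisfy $M_f^\dagger=M_{\bar f}$ in the $L^2(G,dg)$ pairing, while the invariance of the Haar measure under left translations gives $\ell_a^\dagger=-\ell_a$ by integration by parts; hence $(-i\hbar\ell_a)^\dagger=-i\hbar\ell_a$. The two symmetrizations built into (\ref{PQ}) then propagate the involutive property to all of $\qq(\fa)$. For (2), I regard $\qq(a)$ as a differential operator on $G$ of order equal to the top polynomial degree $n$ of $a$: in the left-invariant framing its principal symbol is exactly the top-degree part of $a$ viewed as a fibrewise polynomial on $T^\ast G$. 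If $\qq(a)=0$ this symbol must vanish, which kills the top-degree part of $a$; induction on the polynomial degree then gives $a=0$.

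\textbf{Subalgebra structure, $\ast$-product, and filtration.} For (3), I would write $a,b\in\fa$ in the basis $f\,x_{a_1}\cdots x_{a_n}$ and reduce $\qq(a)\qq(b)$ to symmetric ordered form by iteratively applying the relations (\ref{QQ}). Each reordering step costs one explicit factor $-i\hbar$, removes one $\qq(x)$, and either turns into a multiplication operator $M_{\ell_a f}$ (still in $\mathcal{F}_l(G)$ when $f\in\mathcal{F}_l(G)$, since the $\mathbb{L}$-eigenspaces are $\mathfrak{g}$-invariant) or produces one new $\qq(x_c)$. The procedure terminates, and by the injectivity of $\qq$ the resulting symmetric expression lifts uniquely to an element $c\in\fa$, defining $a\ast b=c$ and proving (3). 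For (4), reading off the coefficients of $\hbar^k$ in this expansion yields $D_k(a,b)$: at $k=0$ no commutator has been invoked and one obtains $D_0=ab$; at $k=1$ the contribution comes from a single commutator, and a direct check on generators $f,h\in\fg$ and $x_a,x_b$ using $[\qq(x_a),M_f]=-i\hbar M_{\ell_a f}$ and $[\qq(x_a),\qq(x_b)]=-i\hbar f_{ab}^c\qq(x_c)$ matches $\tfrac{i}{2}\{a,b\}$; the identification extends to all $a,b\in\fa$ because $D_1$ is a derivation in each slot (forced by associativity of $\ast$ modulo $\hbar^2$), as is the Poisson bracket. Since each commutator step reduces the number of $\qq(x)$ factors by exactly one, $D_k(a,b)\in\fa^{n+m-k}$ is automatic, and no $\hbar$-dependence is introduced when $a,b$ are $\hbar$-independent. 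Finally, for (5), because $\ell_a$ preserves every $\mathcal{F}_l(G)$, the reordering never raises the Casimir length of the coefficient functions above $l_1+l_2$; combining this with the polynomial-degree bound from (4) gives $\fa^{(t)}\ast\fa^{(s)}\subset\fa^{(t+s)}$, which together with (4) exhibits $\ast$ as a deformation quantization of $\fa\subset C^\infty(M,\cc)$.
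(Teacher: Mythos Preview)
The paper does not actually supply a proof of this proposition: the sentence preceding it reads ``The next statement, for which the proof is left to the reader, represents a form of the correspondence principle.'' So there is nothing to compare against; your outline fills in precisely what the author omitted, and the strategy you chose (PBW-style reordering using the relations (\ref{QQ}), principal-symbol argument for injectivity, $\mathfrak{g}$-invariance of the $\mathbb{L}$-eigenspaces for the Casimir-length bound) is exactly what the paper's surrounding remarks about $\mathcal{U}(\mathcal{L})$ and the filtered Lie algebra $\mathcal{L}_l=\mathfrak{g}\subset\hspace{-1em}+\mathcal{F}_l(G)$ invite.

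One point to tighten. Your argument for $D_1(a,b)=\tfrac{i}{2}\{a,b\}$ checks the formula on generators and then says ``the identification extends to all $a,b\in\fa$ because $D_1$ is a derivation in each slot (forced by associativity of $\ast$ modulo $\hbar^2$).'' This is not quite right: associativity at order~$\hbar$ only makes $D_1$ a Hochschild $2$-cocycle, and a $2$-cocycle on a commutative algebra need not be a biderivation---only its antisymmetric part is. What actually pins down $D_1$ here is the \emph{specific} symmetric ordering in (\ref{PQ}): the symmetrizations $\tfrac12(\qq(f)\qq(\mathrm{poly})+\qq(\mathrm{poly})\qq(f))$ and $\tfrac{1}{n!}\sum_\sigma\sigma(\cdots)$ are chosen so that the symmetric part of the first-order correction vanishes identically (this is the standard feature of Weyl ordering, and the reason the Gutt star product on $T^\ast G$ has $D_1=\tfrac{i}{2}\{\cdot,\cdot\}$). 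You should either invoke this property of symmetric ordering directly, or carry out the check on monomials $f\,x_{a_1}\cdots x_{a_n}$ explicitly rather than rely on the biderivation extension. With that adjustment the argument is complete.
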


Deformation quantization on the cotangent bundles of Lie groups has been studied by many authors; see, for example, \cite{gutt1983explicit, tounsi2003integral, domanski2021deformation}. 

Using the commutation relations (\ref{QQ}) one can bring prequantum observables (\ref{PQ}) into the form where all operators $\qq(x_a)$ appear to the right of operators $\qq(f)$.  In the multi-index notation (\ref{MI}), this reordering of factors takes the form 
\begin{equation}
    \qq\big(a_{\bf n} x^{\bf n}\big)=\sum_{  |{\bf m}|\leq |{\bf n}|   }\qq\big(a^\hbar_{\bf m}\big)\qq\big(x^{\bf m}\big)
\end{equation}
for some $a^\hbar_{\bf m}\in \fg$. 
 
The operators $\qq(a)$ are generally unbounded unless $a\in \fg$. A more detailed estimate can be obtained as follows. The filtration by the Casimir length (\ref{FF}) extends naturally to the pre-Hilbert space $\hh=\bigcup_{l\geq0} \hh^l$. For $\psi\in \mathcal{H}^l$, we have
\begin{equation}\label{qx}
\|\qq(x_a)\psi\|^2\leq \sum_{b=1}^{d}\|\qq(x_b)\psi\|^2=\delta^{ab}(\qq(x_a)\psi,\qq(x_b)\psi)=
\hbar^2(\psi, \mathbb{L}\psi)\leq \hbar^2l^2\|\psi\|^2\,.
\end{equation}
If $f\in \fg$, then 
\begin{equation}
\|\qq(f)\psi\|^2=\|f\psi\|^2=\int_G |f(g)|^2 |\psi(g)|^2dg\leq |f|_\infty ^2\|\psi\|^2\,.
\end{equation}
Using the definition of $R$-norm (\ref{RN}) with $R=\hbar l$, we finally obtain
\begin{equation}\label{QR}
\|\qq(a)\psi \|\leq |a|_{R} \cdot\|\psi\|\,,\qquad \forall a\in \fa\,, \quad \forall \psi\in \hh^l\,.
\end{equation}
This shows that when restricted to the subspace $\hh^l$, the prequantization map is controlled by the $R$-norm, providing a concrete estimate of the operator size.

\subsection{Quantizable semialgebraic domains}

We consider open domains $D\subset \gs$ that are defined by a finite set of inequalities
\begin{equation}\label{obs}
    d_A(x) > 0\,,
\end{equation}
where $d_A$ are polynomials. Such domains are called {\it semialgebraic} or, more  precisely, {\it basic (open) semialgebraic sets} \cite{Berr2001PositivePO, book:92008691,scheiderer2024course,book:100944507}. To indicate the set of polynomials used to define a basic semialgebraic  set $D$, we write $D=D(d_1, \ldots, d_n)$. Replacing the strict equality $d_A>0$ with the non-strict inequality $d_A\geq 0$ defines a {\it basic closed semialgebraic set}. More general  semialgebraic sets are obtained by applying the set-theoretic operations of intersection, union, and complement. For example, if $D(d_1,\ldots, d_n)$  and $D'(d'_1,\ldots, d_m')$, then 
$$
(D\cap D')(d_1,\ldots,d_n,d'_1,\ldots,d_m')\,.
$$
Clearly, any finite union of open intervals on $\mathbb{R}$ is semialgebraic . The open ball $B^d_R$: 
$$
R^2-(x_1^2+\cdots +x_d^2)>0
$$
is a semialgebraic  set in $\mathbb{R}^d$. We say that a semialgebraic  set $D\subset \mathbb{R}^d$ is {\it bounded} if there exists a ball $B^d_R$ such that $D\cap B^d_R=D$. The minimal $R$ satisfying this property is called the {\it radius of the domain} $D$.

    According to the definition, a basic semialgebraic  domain $D=D(d_1,\ldots,d_n)$ is not just an open set in $\gs$ -- it is an open set defined by a specific collection of polynomials $d_1,\ldots, d_n$. If two different collections of polynomials define the same domain $D\subset \gs$, we regard them as defining different semialgebraic   sets. For example, the inequalities 
    $$
    d=1-x^2-y^2>0\quad \mbox{and}\quad d'=(x^2+3y^2+1)(1-x^2-y^2)>0
    $$
both define the unit disc in the plane, but $D(d)\neq D(d')$.  Given a basic semialgebraic  set $D(d_1,\ldots, d_n)$, one can introduce the following  semirings\footnote{A semiring is a generalization of a ring in which additive inverses are not required.}:
    \begin{itemize}
    \item $S(D)=\{\;f\in \pol\;| \; f|_{\bar D}>0\; \}$ -- the semiring of polynomials strictly positive on the topological closure $\bar D$, defined by the inequalities $d_A(x)\geq 0$; 
        \item $\langle d_1,\ldots, d_n, \mathbb{R}_+ \rangle$ -- the semiring generated by the defining polynomials $d_1,\ldots,d_n$ together with positive reals;  
        \item $\langle d_1,\ldots, d_n, \pol^2 \rangle$ -- the semiring generated by the $d$'s and the squares of polynomials. 
        \end{itemize}
    It is clear that $$\langle d_1,\ldots, d_n, \mathbb{R}_+ \rangle\subset\langle d_1,\ldots, d_n, \pol^2 \rangle\subset S(D)\,.$$
    In general, both inclusions are proper. Consequently,  neither the polynomials $d_1,\ldots,d_n$ nor the semiring they generate are uniquely determined by an open set $D$. However, it is known that 
    $$
    S(B^d_R)=\langle R^2-x_1^2-\cdots-x^2_{d}\,, \,\mathrm{Pol}(\mathbb{R}^d)^2 \, \rangle
    $$
    and
    $$
    S([a,b]^d)=\langle b-x_1,\ldots, b-x_d,x_1-a,\ldots, x_d-a, \mathbb{R}_+\rangle \,.$$
    
Let us also mention the following general result, whose proof can be found in    
\cite{Berr2001PositivePO, book:92008691,scheiderer2024course,book:100944507}.

\begin{theorem}[Schm\"udgen]  \label{T1}
    Let $D=D(d_1,\ldots, d_n)$ be a bounded semialgebraic set. Then $$ S(D)=\langle d_1,\ldots, d_n, \pol^2 \rangle\,.$$ 
\end{theorem}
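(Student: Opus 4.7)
The plan is to follow the standard reduction (due to W\"ormann) that presents Schm\"udgen's theorem as the conjunction of Stengle's classical Positivstellensatz with an archimedean property of the preordering $T := \langle d_1, \ldots, d_n, \pol^2 \rangle$. Stengle's theorem provides, for any $f$ strictly positive on $\bar D$, an identity $fp = 1 + q$ with $p, q \in T$; this is the entry point but does not by itself exhibit $f$ as an element of $T$, since the factor $p$ must be eliminated from the left-hand side.

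The crucial step is to establish the archimedean property of $T$: for every polynomial $h$ there exists an integer $N$ with $N - h \in T$. Since $T$ is multiplicatively closed and contains all squares, this reduces to showing that $C - \sum_a x_a^2 \in T$ for some sufficiently large constant $C$. Here the boundedness of $D$ enters in an essential way: by hypothesis $D \subset B^d_R$, so $C - \sum_a x_a^2$ is strictly positive on $\bar D$ for any $C > R^2$. I would then argue by induction on the total degree: starting from the defining inequalities $d_A \geq 0$ and the squares, combine them (using closure of $T$ under products with squares) to dominate each monomial $x^{\bf n}$, invoking a P\'olya-type lemma that controls the higher-degree terms by lower-degree combinations.

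Once archimedeanness is in hand, the conclusion $f \in T$ follows by a short algebraic manipulation of the Stengle identity $fp = 1 + q$. Pick an integer $N$ with $N - p \in T$, so that $p$ is bounded on $\bar D$. Multiplying $f$ by a sufficiently large odd power of itself and exploiting that $f^{2k} \in T$ for large $k$ (a consequence of $f > 0$ on $\bar D$ together with the archimedean property applied to $1/f$), one rearranges to cancel the spurious factor $p$ and extract $f$ itself as an element of $T$.

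The main obstacle is the archimedean step. The containment $C - \sum_a x_a^2 \in T$ is \emph{not} automatic, because $T$ is generated by the \emph{specific} polynomials $d_1, \ldots, d_n$ whose relation to the Euclidean norm is not given by the hypothesis; this is also precisely why Schm\"udgen's theorem fails in the unbounded setting. Rather than reconstruct the inductive degree argument in full, I would ultimately cite one of the streamlined modern proofs (Jacobi--Prestel, Scheiderer) which handle this step through convex duality or structure theorems for archimedean preorderings.
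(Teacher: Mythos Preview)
The paper does not prove this theorem; it is quoted as a known result with references to Berr--W\"ormann, Prestel--Delzell, Scheiderer, and Schm\"udgen's monograph. Your high-level strategy---Stengle's Positivstellensatz combined with archimedeanness of $T$---is indeed W\"ormann's algebraic route and is what one finds in those references, so at that level you are aligned with what the paper defers to.

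However, your final paragraph contains a real gap. Once $T$ is archimedean, the passage from ``$f>0$ on $\bar D$'' to ``$f\in T$'' is \emph{not} achieved by manipulating the Stengle identity $fp=1+q$: the archimedean property cannot be ``applied to $1/f$'' (which is not a polynomial), and $f^{2k}\in T$ is trivial for every $k$ (it is a square) but gives no mechanism to cancel $p$---from $f^{2k+1}p\in T$ and $N-p\in T$ you would need $f^{2k+1}(N-p)\in T$, which already presupposes what you want. In the correct W\"ormann argument the logic runs the other way: Stengle is applied only to the single polynomial $C-\sum_a x_a^2$, and an algebraic lemma then extracts $M-\sum_a x_a^2\in T$ for some $M$, establishing archimedeanness; the conclusion for arbitrary strictly positive $f$ follows from the abstract archimedean Positivstellensatz (Kadison--Dubois or Jacobi), whose proof proceeds via maximal orderings or separating states rather than by cancellation in a Stengle identity. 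Your P\'olya-type induction for the archimedean step is likewise off target---P\'olya's theorem concerns homogeneous forms positive on the simplex and does not transfer to this setting without substantial additional work. Since you concede you would ultimately cite the archimedean step anyway, the honest summary is that your sketch reduces to the same citations the paper already gives.
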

In other words, any strictly positive polynomial $f$ on a {\it compact} semialgebraic set defined  by inequalities $d_A(x)\geq 0$ can be represented as 
\begin{equation}
 f=\sum_{e_k\in \{0,1\}} f^2_{e_1e_2\cdots e_n}d_1^{e_1}d_2^{e_2}\cdots d_n^{e_n}
\end{equation}
for some polynomials $f_{e_1e_2\cdots e_n}(x)$.

Given  a basic semialgebraic domain  $D(d_1,\ldots,d_n)$, we can treat the polynomials $d_A(x)$ as classical observables from the subalgebra $\fa_0\subset\fa$. When passing to quantization, we allow the polynomials $d_A(x)$ to depend on the parameter $\hbar$. A domain  $D(d_1,\ldots,d_n)$ is said to be {\it quantizable} if 
$$
[\qq(d_A),\qq(d_B)]=0\,,\qquad \forall \hbar\geq 0\,, \quad\forall A,B=1,\ldots,n\,.
$$
Two quantizable domains $D(d_1,\ldots,d_n)$ and $D(d'_1,\ldots,d'_m)$ are called {\it compatible} if their intersection $D\cap D'$ is quantizable\footnote{The case $D\cap D'=\varnothing$ is not excluded.}, meaning that $[\qq(d_A),\qq(d'_B)]=0$. More generally, a semialgebraic  domain is quantizable if it is a union of quantizable basic domains that are mutually compatible.  Hence, the operations of intersection and union are only partially defined for quantizable domains. However, the ball $B^d_R$ associated with the quadratic Casimir element $C_2=t_1^2+\ldots+t_d^2$ is compatible with any quantizable domain, and the same is true for the semialgebraic domains associated with higher Casimir elements. This property allows us to apply the above notions of boundedness and radius to quantizable semialgebraic domains. In the following, we will mostly consider quantizable bounded domains of the form $D(d_A, R^2-C_2)=D(d_A)\cap B^d_R$.

\begin{example}Let us consider the three-dimensional Lie algebra $\mathfrak{g}=\mathfrak{su}(2)$ with commutation relations:
$$
[t_1,t_2]=t_3\,,\qquad[t_2,t_3]=t_1\,,\qquad[t_3,t_1]=t_2\,.
$$
The quadratic Casimir element $t_1^2+t_2^2+t_3^2$ gives rise to a family of open balls of radius $R$ in the dual space $\gs\simeq \mathbb{R}^3$:
$$
B^3_R=D(R^2-x_1^2-x_2^2-x_3^2)\,.
$$
Further examples can be constructed by taking  intersections $B^3_R\cap D$ of the $3$-ball  with a quantizable semialgebraic domain $D$. 
Here are some examples:
\begin{equation}
\begin{array}{c}
    D_1=D(x_3-z)\,,\qquad D_2=D(x_3^2-x_1^2-x_2^2\,,\,-x_3)\,,\\[3mm]
    D_3=D(r^2-x_1^2-x_2^2\,, \,z_1-x_3\,,\,x_3-z_2)\,, \qquad D_4=D(x_1^2+x_2^2+x_3^2-r^2)\,.
    \end{array}
\end{equation}
Geometrically,  the intersection $B^3_R\cap D_1$ represents a spherical cap (provided $R>z>0$), while $B^3_R\cap D_2$ defines a spherical cone (see Fig.\ref{Pic1}). The domain $B^3_R\cap D_3$ is given by the intersection of the ball with a cylinder of radius $r$ and height $z_1-z_2$; for $r>R$, it degenerates to a spherical segment.  Finally, the intersection $B_R^3\cap D_4$ is a spherical shell of thickness $R-r$.    All these bounded domains are quantizable.
\end{example}
\begin{example}
    For the same Lie algebra $\mathfrak{su}(2)$, we may define a finite set of balls 
    $$
    D_k=D\big(R^2_k-x_1^2-x_2^2-(x_3+z_k)^2\big)\,.
    $$
The semialgebraic domain $D=\bigcup_kD_k$ is obviously bounded and quantizable. Depending on the parameters $R_k$ and $z_k$, it may consist of several connected components.    
    
\end{example}

\begin{figure}
    \centering
\begin{tikzpicture}[scale=1.2]
\draw[fill=blue!15] (0, 0) circle (2);
\draw[fill=green!15] (0, 0) circle (1.7);
\draw[fill=red!20, opacity=0.8] (45:2) arc(45:135:2) -- cycle;
\draw[fill=yellow!25, opacity=0.8] (0,0)--(-45:2) arc (-45:-135:2)--cycle;
\draw[->] (0,-2.5) -- (0,2.5);
\draw[->] (-2.5,0) -- (2.5, 0);
\draw (2.5, 0) node [below] {${}_{x_1}$};
\draw (0, 2.4) node [left] {${}_{x_3}$};
\draw[-, fill=blue!15] (4.9, 1.1) -- (9.1, 1.1)--(9.5,1.5)--(4.5, 1.5)--cycle;
\draw [-] (7,-1)--(10,2);
\draw [-] (7,-1)--(4,2);

\draw[-, fill=green!20] (7,-1)--(9.1,1.1)--(4.9 ,1.1)--cycle;
\draw[-, fill=red!20, opacity=0.8]  (9.5,1.5)--(8.6,0.6)--(8.6,1.5)--cycle;

\draw[-, fill=yellow!25, opacity=0.8] (7, -1) .. controls(6.9,-1) .. (5.35, 1.5)--(4.5, 1.5)--cycle;

\draw[-, thin] (7.2,-1.8)--(5,2);
\draw[->] (7,-2.5) -- (7,2.5);
\draw[->] (4,-1) -- (10, -1);

\foreach \y in {0,...,14}
\foreach \x in {0,...,\y}
  \draw[fill=black] (\x/2.5+7- \y/5,\y/5-1) circle (0.5pt);

\draw (7, 2.4) node [left] {${}_{j}$};
\draw (9.9, -1) node [below] {${}_{m}$};
\end{tikzpicture}
 \caption{Left panel:  projection of quantizable semialgebraic domains in $\mathfrak{su}(2)^\ast\simeq\mathbb{R}^3$ onto the plane $x_1x_3$, 
 showing a spherical cap, a cone, and a shell. Right panel: eigenstates of these domains on the weight diagram of the right regular representation of $SU(2)$, where $j\in \frac12\mathbb{N}_0$ and each weight occurs with multiplicity $2j+1$.}
    \label{Pic1}
\end{figure}
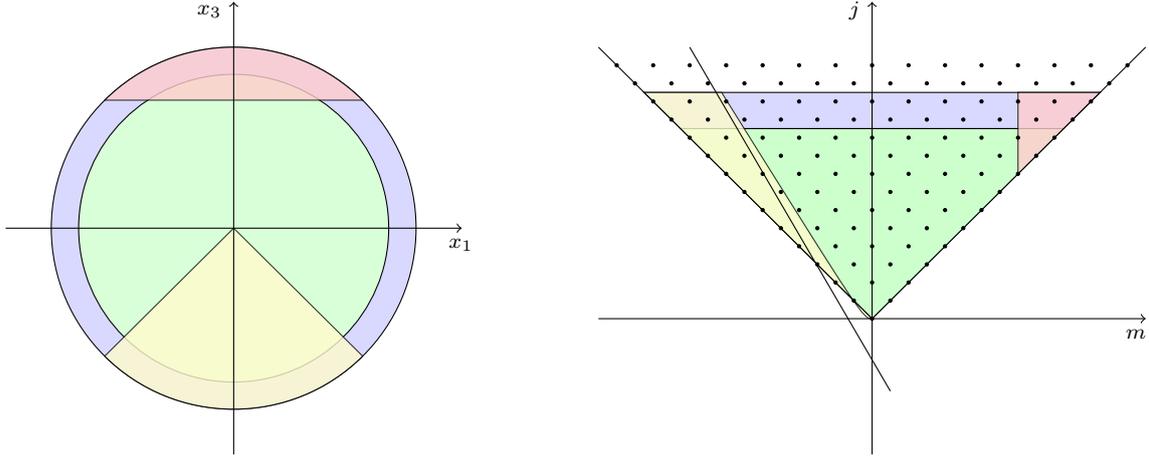

\begin{example} Continuing with $\mathfrak{su}(2)$, consider  the open cube 
$$
D(1-x_1^2\,,\, 1-x_2^2\,,\,1-x_3^2)\,.
$$
Although this semialgebraic domain is not quantizable, we can approximate it arbitrarily closely by the family of solid superellipsoids
$$
D_n=D(1-x_1^{2n}-x_2^{2n}-x_3^{2n}) \,.
$$
It is clear that  each superellipsoid domain is bounded and quantizable, and that $D_n\rightarrow D$ as $n\rightarrow\infty$.
\end{example}

\subsection{Quantization}

We continue with the notation and definitions of Sec. \ref{pre-Q}. Suppose we are given a symplectic domain 
$G\times D\subset T^\ast G$, where  $D=D(d_1,\ldots, d_n)$ is a bounded and quantizable semialgebraic set in $\gs$ of radius $R$.  As the first step, we associate to $D$ a subspace $\hh^D$ of the pre-Hilbert space $\hh=\fg$. This is defined as follows. Using the prequantization map (\ref{PQ}), we first define the operators $\qq(d_A)$. Among these operators we have $R^2 -\hbar^2\mathbb{L}$. 
By definition, the operators $\qq(d_A)$ are Hermitian and commute with each other. Therefore, we can speak of their common eigenfunctions defined by the equations:
$$
\qq(d_A)\psi=\lambda_A\psi\,,\qquad A=1,\ldots, n\,.
$$
The eigenfunctions that correspond to the positive spectrum of the $\lambda_A$'s span a subspace in $\hh$, which we denote $\hh^D$. In other words,
\begin{equation}\label{psi}
\hh^D=\mathrm{span}\big\{\psi \in \hh\;|\;\qq(d_A)\psi=\lambda_A\psi\,, \; \lambda_A>0\,,\;A=1,\ldots,n\;\big\} \,.
\end{equation}
The presence of the Laplace operator implies that $\dim \hh^D <\infty$. The Weyl law (\ref{WR}) approximates the dimension of $\hh^D$ for $R/\hbar\gg 1$. By abuse of language, we will refer to $\psi$ and $\hh^D$ from (\ref{psi}) as the eigenstate and eigenspace of the domain $D$. If $D_1$ and $D_2$ are two quantizable basic domains that are compatible with each other,  then $D_1\cup D_1$ is quantizable and we define
\begin{equation}\label{DD}
 \hh^{D_1\cup D_2}:=\hh^{D_1}+\hh^{D_2}\,.
\end{equation}
Since all the Hermitian operators $\qq(d_A)$ associated with the compatible domains $D_1$ and $D_2$ are pairwise commuting, we can choose an orthonormal basis in $\hh^{D_1\cup D_2}$ by combining  orthonormal bases in $\hh^{D_1}$ and $\hh^{D_2}$.   Using this diagonalizing basis shows that 
$$
\hh^{D_1\cap D_2}=\hh^{D_1}\cap\hh^{D_2} 
$$
and 
$$
\hh^{D_1}\cap\hh^{D_2}=0\quad \Longleftrightarrow\quad \hh^{D_1}\perp\hh^{D_2}\,.
$$
In the last case, we say that the domains $D_1$ and $D_2$ are $\hbar$-disconnected. Conversely, the domains $D_1$  and $D_2$ are said to be $\hbar$-connected if they share a common eigenstate. Let us emphasize that the topological connectedness of the domains ($D_1\cap D_2\neq \varnothing$) does not necessarily imply their $\hbar$-connectedness and two compatible domains may well be $\hbar$-connected for one value of $\hbar$ and $\hbar$-disconnected for another.  For $\hbar$-disconnected domains $D_1$ and $D_2$, the sum in (\ref{DD}) becomes direct.

The space $\hh^D$ is identified with the state space of a quantum-mechanical system whose classical phase space is  $G\times D$. Since the factor $G$ represents the space of momenta, we can speak of a momentum representation for the space of states.  Let $P_D$ denote the Hermitian projector onto the  subspace $\hh^D\subset \hh$. By definition,
\begin{equation}
    P_D^\dagger=P_D\,,\qquad P_D^2=P_D\,.
\end{equation}
Given a classical observable $a\in \fa$, we set
\begin{equation}
    \qq^D(a)=P_D\qq(a)P_D\,.
\end{equation}
Since $\mathrm{Im} \big(\qq^D(a)\big)\subset \hh^D$, we can think of $\qq^D(a)$ as an operator on the space $\hh^D$. 
This defines a linear map $\qq^D:\fa\rightarrow \mathscr{L}(\hh^D)$ from the space of classical observables to the algebra of linear operators on $\hh^D$, which we call the {\it quantization map}. If $\dim \hh^D=N$, then we will write $\hh^D_N$ to indicate the dimension of $\hh^D$.   Following the pattern of the cylinder from Sec. \ref{RecDom}, we endow the finite-dimensional space  $\mathscr{L}(\hh^D_N)$  with the normalized Frobenius norm:  
\begin{equation}
    ||A||_N^2=\frac1N{\mathrm{Tr}}(A^\dagger A)\,,\qquad \forall A\in \mathscr{L}(\hh^D_N)\,.
\end{equation}

\begin{definition}[Bulk and Boundary States]
Let \(\mathcal{H}^D\) be the state space associated with a quantizable domain \(D\), and let
$$
\hh^D_{(t)}\supset \hh_{(t')}^D\,,\qquad t\leq t',\qquad \mathcal{H}^D_{(t)} = \left\{ \psi \in \hh^D \ \middle|\ \qq(a)\psi \in \mathcal{H}^D \quad \forall\, a \in \fa^{(t)} \right\}
$$
be the descending filtration induced by the total filtration \(\fa^{(t)}\) in the space of classical observables.
The elements of \(\mathcal{H}^D_{(t)}\) are called {bulk states of filtration degree \(t\)}, by analogy with the interior points of \(D\).  
Correspondingly, the elements of the orthogonal complement $(\mathcal{H}^D_{(t)})^\perp$ are referred to as {boundary states}.
\end{definition}

\begin{definition}[Thick Domain]
A quantizable domain \(D\) is called {thick} if, for any finite \(t\),
\[
\lim_{\hbar \to 0} \frac{\dim \mathcal{H}^D_{(t)}}{\dim \mathcal{H}^D} = 1\qquad \Longleftrightarrow\qquad \lim_{\hbar \to 0} \frac{\dim (\mathcal{H}^D_{(t)})^\perp}{\dim \mathcal{H}^D} = 0\,.
\]
In other words, in the classical limit \(\hbar \to 0\), the proportion of boundary states with any given filtration degree becomes negligible compared to the bulk states of that filtration degree.
\end{definition}

\begin{figure}
    \centering
\begin{tikzpicture}[scale=1.2]
\draw[] (0, 0) circle (2);
\draw[fill=blue!20, opacity=1] (25:2) arc(25:155:2) -- cycle;
\draw[fill=green!20, opacity=1] (40:1.72) arc(40:140:1.72) -- cycle;
\draw[->] (0,-2.5) -- (0,2.5);
\draw[->] (-2.5,0) -- (2.5, 0);
\draw (2.5, 0) node [below] {${}_{x_1}$};
\draw (0, 2.4) node [left] {${}_{x_3}$};

\draw [-] (7,-1)--(10,2);
\draw [-] (7,-1)--(4,2);

\draw[-, fill=blue!20, opacity=1]  (9.5,1.5)--(7.5,-0.5)--(7.5,1.5)--cycle;

\draw[-, fill=green!20, opacity=1]  (9.13,1.13)--(7.9,-0.1)--(7.9,1.13)--cycle;

\draw[->] (7,-2.5) -- (7,2.5);
\draw[->] (4,-1) -- (10, -1);

\foreach \y in {0,...,14}
\foreach \x in {0,...,\y}
  \draw[fill=black] (\x/2.5 - \y/5+7,\y/5-1) circle (0.5pt);

\draw (7, 2.4) node [left] {${}_{j}$};
\draw (9.9, -1) node [below] {${}_{m}$};
\end{tikzpicture}
 \caption{ Left panel: a spherical cap in $\mathfrak{su}(2)^\ast\simeq\mathbb{R}^3$. Right panel: the weight diagram of the right regular  representation of $SU(2)$. The green and blue regions indicate the bulk and the boundary states, respectively.}
    \label{Pic2}
\end{figure}
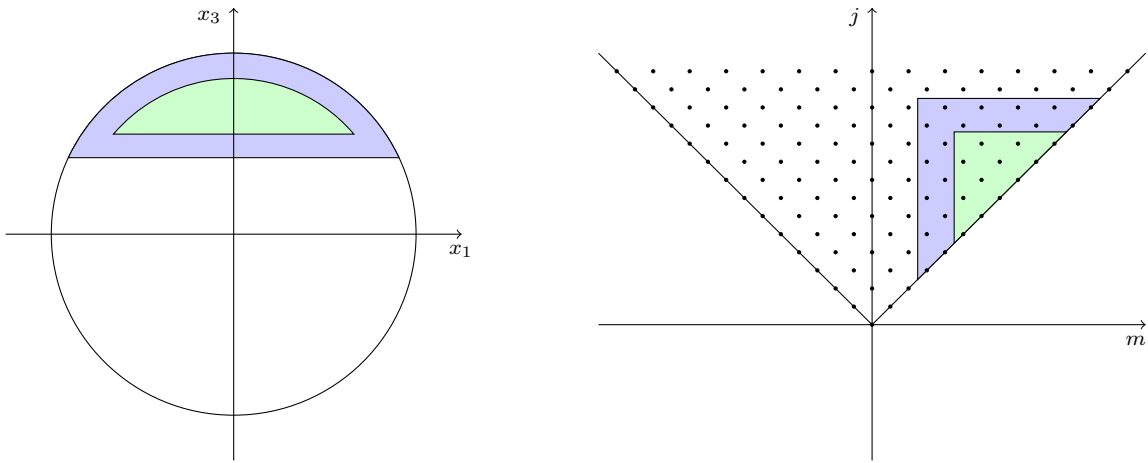

\begin{example}Consider the Lie algebra $\mathfrak{g}=\mathfrak{su}(2)$ and take as the bounded domain  $D$ the spherical  cap defined by the polynomials
\begin{equation}
   d_1=R^2-x^2_1-x^2_2-x^2_3\,,\qquad d_2=x_3-h\,,
\end{equation}
where $h<R$. The pre-Hilbert space $\hh=\mathcal{F}(SU(2))$ carries the right regular representation of $SU(2)$ generated by $\qq(x_a)$. This action preserves the Casimir filtration on $\mathcal{F}(SU(2))$, so that the associated graded space defines the decomposition $\hh$ into the direct sum of invariant subspaces:
\begin{equation}
    \hh=\bigoplus_{j\in \frac12\mathbb{N}_0}\hh_j\,,\qquad    \hh_j \ni \psi \quad \Longleftrightarrow\quad \mathbb{L}\psi=j(j+1)\psi\,.\end{equation}
Each subspace $\hh_j$ in turn decomposes into $2j+1$ invariant subspaces of the same dimension. These carry the irreducible representations of spin $j$, so that $\dim\hh_j=(2j+1)^2$. The eigenstates of the domain  $D(d_1, d_2)$ can be conveniently depicted on the weight diagram of the right regular representation of $SU(2)$ (see  Fig. \ref{Pic2}).  It is easy to compute that the total number of weights associated with the cap region with $R^2=\hbar^2s(s+1)$ and $h=\hbar m$ is 
\begin{equation}
    N=\sum_{j=m+1/2}^s (2j+1)(2j-2m)= \frac{2}{3}{(s-m)(2s-2m+1)\left(2s + m +2\right)}\,.
\end{equation}
In the classical limit $\hbar\rightarrow 0$ this yields
\begin{equation}
    N\sim \frac4{3\hbar^3}\big[{(R-h)^2(2R+h)}+\mathcal{O}(\hbar)\big]\,.
\end{equation}
For $h=-R$, we reproduce the Weyl law (\ref{WR}). Now, for each $t$, consider a smaller cap region $D_t\subset D$ defined by the polynomials: 
\begin{equation}
    d^t_1=R_t^2-x_1^2-x_2^2-x_3^2\,,\qquad d^t_2=x_3-h_t\,,
\end{equation}
$$R_t=R-\hbar t\,,\qquad h_t=h+\hbar t\,.$$ Using the Cartan--Weyl basis for $\mathfrak{su}(2)$ and the total filtration for $\fg$,  one can easily see that the state space 
$\hh^{D_t}$ is spanned by the bulk states of the initial spherical cap $D(d_1,d_2)$. Since in the classical limit $R_t\rightarrow R$ and $h_t\rightarrow h$, we immediately conclude that 
\begin{equation}
    \lim_{\hbar\rightarrow 0}\frac{\dim \hh^{D_t}}{\dim \hh^D}=\lim_{\hbar\rightarrow 0}\frac{N_t}{N}=1\,.
\end{equation}
Hence, the spherical cap defines a thick domain. 

\end{example}

It is not hard to extend these examples to more general thick domains determined by higher-order Casimir elements and vectors of the Cartan subalgebra, but we will not dwell on that here. 
 
\begin{theorem} For any quantizable domain $D$ the following properties hold: 
    \begin{enumerate}
        \item[i)] normalization: 
                      $$\qq^D(1)=\I\,,$$
        \item[ii)] reality: 
        $$\big(\qq^D(a)\big)^\dagger=\qq^D(\bar a)\,.$$
         \end{enumerate}   
         If the domain $D$ is thick,  then one additionally has
         \begin{enumerate}
         \item [iii)] von Neumann's condition: 
         $$  \lim_{\hbar\to 0}\big\|\qq^D(a)\qq^D(b)-\qq^D(ab)\big\|_N=0\,.$$
        \item [iv)] Dirac's condition: 
        $$  \lim_{\hbar\to 0}\big\|\big((i\hbar)^{-1}[\qq^D(a),\qq^D(b)]-\qq^D(\{a,b\})\big)\psi\big\|=0\qquad \forall a,b\in \fa_{(t)},\quad \forall \psi\in \hh^D_{(2t)}\,.$$
        
    \end{enumerate}
      \end{theorem}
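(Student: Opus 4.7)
Properties (i) and (ii) are immediate. Since $\qq(1) = \I_\hh$ by definition, $\qq^D(1) = P_D^2 = P_D$ is the identity on $\hh^D$. Reality follows from Proposition~\ref{P3}: $\qq(\bar a) = \qq(a)^\dagger$ combined with $P_D^\dagger = P_D$ gives $\qq^D(\bar a) = (P_D \qq(a) P_D)^\dagger = \qq^D(a)^\dagger$.

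The substance lies in (iii) and (iv), both of which rest on one algebraic identity. Inserting $\I = P_D + P_D^\perp$ in the middle and using $\qq(a)\qq(b) = \qq(a \ast b)$ from Proposition~\ref{P3}, one has
\[
\qq^D(a)\qq^D(b) = P_D \qq(a \ast b) P_D - P_D \qq(a) P_D^\perp \qq(b) P_D.
\]
Hence $\qq^D(a)\qq^D(b) - \qq^D(ab)$ splits into a \emph{bulk} piece $P_D\qq(a\ast b - ab)P_D$ and a \emph{boundary} piece $-P_D\qq(a)P_D^\perp\qq(b)P_D$. The bulk piece is $O(\hbar)$ in operator norm: the expansion $a \ast b - ab = \sum_{k \geq 1} \hbar^k D_k(a,b)$ (Proposition~\ref{P3}) has $\hbar$-independent $D_k$, and since $\hh^D$ lies in the subspace of Casimir length at most $R_D/\hbar$ (enforced by the defining polynomial $R^2 - C_2$ of radius $R_D$), estimate (\ref{QR}) gives $\|\qq(D_k(a,b))|_{\hh^D}\| \leq |D_k(a,b)|_{R_D}$ uniformly in $\hbar$. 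Since the Frobenius norm is dominated by the operator norm, this piece disappears in the limit.

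The boundary piece $T := P_D\qq(a)P_D^\perp\qq(b)P_D$ is where thickness enters. Fixing $r$ with $b \in \fa^{(r)}$, the definition of bulk states gives $\qq(b)\psi \in \hh^D$ for every $\psi \in \hh^D_{(r)}$; thus $T$ annihilates $\hh^D_{(r)}$. Evaluating the normalized Frobenius norm in a basis adapted to $\hh^D = \hh^D_{(r)} \oplus (\hh^D_{(r)})^\perp$, only boundary basis vectors contribute, yielding
\[
\|T\|_N^2 \;\leq\; \frac{\dim(\hh^D_{(r)})^\perp}{N}\,|a|_{R_D + \hbar L}^2\,|b|_{R_D}^2,
\]
where $L$ is the fixed Casimir length of $b$. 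Thickness forces the dimension ratio to zero as $\hbar \to 0$, proving (iii). For (iv), the same identity gives
\[
(i\hbar)^{-1}[\qq^D(a),\qq^D(b)] - \qq^D(\{a,b\}) = \qq^D\!\bigl(\tfrac{a\ast b - b\ast a}{i\hbar} - \{a,b\}\bigr) - (i\hbar)^{-1}P_D\bigl(\qq(a)P_D^\perp\qq(b) - \qq(b)P_D^\perp\qq(a)\bigr)P_D.
\]
Because $D_1(a,b) - D_1(b,a) = i\{a,b\}$, the bulk correction is $\qq^D$ applied to an $O(\hbar)$ element with $\hbar$-independent coefficients, whose norm on any $\psi \in \hh^D$ is $O(\hbar)$. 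The boundary terms vanish identically on $\psi \in \hh^D_{(2t)}$: since the filtration is descending, $\hh^D_{(2t)} \subset \hh^D_{(t)}$, so for $a, b \in \fa^{(t)}$ both $\qq(a)\psi$ and $\qq(b)\psi$ lie in $\hh^D$ and are annihilated by $P_D^\perp$.

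The main technical hurdle is keeping the operator-norm bound on the boundary piece uniform in $\hbar$. One application of $\qq(b)$ can enlarge the Casimir length of $\psi$ by at most the fixed Casimir length $L$ of $b$, so the $R$-norm entering (\ref{QR}) is $|a|_{R_D + \hbar L}$, which remains bounded (indeed tends to $|a|_{R_D}$) as $\hbar \to 0$. Without this uniform control the thickness-driven decay $\dim(\hh^D_{(r)})^\perp/N \to 0$ would not suffice to kill the boundary term.
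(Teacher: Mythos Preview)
Your proof is correct and rests on the same ingredients as the paper's---the star-product expansion from Proposition~\ref{P3}, the $R$-norm bound (\ref{QR}), and thickness to suppress boundary contributions---but you organize the argument differently. The paper splits the normalized Frobenius sum $\|A\|_N^2 = N^{-1}\sum_k\|A\psi_k\|^2$ along an orthonormal basis adapted to the bulk/boundary decomposition $\hh^D = \hh^D_{(2t)} \oplus (\hh^D_{(2t)})^\perp$, bounding each part separately; you instead split the operator itself via the projector identity $\I = P_D + P_D^\perp$, isolating a bulk operator $P_D\qq(a\ast b - ab)P_D$ (small in operator norm) and a boundary operator $P_D\qq(a)P_D^\perp\qq(b)P_D$ (vanishing on $\hh^D_{(r)}$ with $r$ only the filtration degree of $b$). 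Your decomposition is slightly sharper---it shows the boundary piece already vanishes on the larger space $\hh^D_{(r)}$ rather than $\hh^D_{(2t)}$---and makes explicit the mechanism by which the Casimir-length bound can grow under $\qq(b)$, which the paper handles implicitly by working with the fully projected operator $\qq^D$ throughout. Both routes lead to the same estimates with the same dependence on thickness.
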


\begin{proof}
Let us denote
$$
A:=\qq^D(a)\qq^D(b)-\qq^D(ab)\,,
$$
for some $a,b\in \mathfrak{A}_{(t)}$. We need to evaluate the classical limit of $\|A\|_{N}$. Let $\psi_1,\psi_2,\ldots, \psi_n$ be a basis  in the subspace $\hh^D_{(2t)}$ of bulk states of filtration degree $2t$. We complete this basis to a full basis $\psi_1,\ldots, \psi_N$ in $\hh^D$, where the additional $N-n$ basis vectors represent the boundary states. Then 
\begin{equation}\label{Apsi}
\|A\|^2_N=\frac1N\sum_{k=1}^N\|A\psi_k\|^2=\frac1N\sum_{k=1}^n\|A\psi_k\|^2+\frac1N\sum_{k=n+1}^N\|A\psi_k\|^2\,.
\end{equation}
Since $a, b, ab\in \mathfrak{A}_{(2t)}$, we have 
\begin{equation}
    A\psi_k=\big(\qq(a)\qq(b)-\qq(ab)\big)\psi_k\,,\qquad k=1,\ldots,n\,.
\end{equation}
By Proposition \ref{P3}, the first sum on the right of (\ref{Apsi}) is evaluated as follows:
\begin{equation}
\begin{array}{c}
   \displaystyle \frac1N\sum_{k=1}^n\|A\psi_k\|^2= \frac1{N}\sum_{k=1}^n  \Big\| \sum_{m=1}^{[2t]}\hbar^{m}\qq\big(D_m(a,b)\big)\psi_k\Big\|^2\\[5mm]
 \displaystyle  \leq \frac{n}{N}\Big(\sum_{m=1}^{[2t]}\hbar^{m}\big\| \qq\big(D_m(a,b)\big)\big\|\Big)^2   \leq  \frac{n}{N}\Big(\sum_{m=1}^{[2t]}\hbar^{m}\big|D_m(a,b)\big|_R\Big)^2 \quad \longrightarrow 0
 \end{array}
    \end{equation}
as $\hbar\to 0$. For the second sum, we have 
\begin{equation}
\begin{array}{c}
\displaystyle \frac1N\sum_{k=n+1}^N\|A\psi_k\|^2\leq \frac{N-n}{N}\|A\|^2\\[5mm]
\displaystyle \leq \frac{N-n}N\big(\|\qq(a)\|\cdot \|\qq(b)\|+\|\qq(ab)\|\big)^2
\leq 4|a|^2_R|b|^2_R\frac{N-n}{N}\,.
\end{array}
\end{equation}
By the thickness property,  $(N-n)/N\rightarrow 0$ in the classical limit. 

To prove the Dirac condition, we introduce the operator 
\begin{equation}
B=(i\hbar)^{-1}[\qq^D(a),\qq^D(b)]-\qq^D(\{a,b\})\,.
\end{equation}
For bulk states, we have  
\begin{equation}
 \|B\psi \|=\big\|\big((i\hbar)^{-1}[\qq(a),\qq(b)]-\qq(\{a,b\})\big)\psi\big\|\,,\qquad\forall a,b\in \fa_{(t)}\,,\quad \forall \psi\in \hh^D_{(2t)}\,.
    \end{equation}
Applying then  Proposition \ref{P3} yields 
\begin{equation}
\begin{array}{c}
  \displaystyle  \|B\psi\|=\sum_{m=2}^{[2t]}\hbar^{m-1}\big\| \qq\big(D_m(a,b)-D_m(b,a)\big)\psi\big\|  \\[3mm] 
    \displaystyle  \leq \sum_{m=2}^{[2t]}\hbar^{m-1}\big\| \qq\big(D_m(a,b)-D_m(b,a)\big)\big\|   \leq  \sum_{m=2}^{[2t]}\hbar^{m-1}\big|D_m(a,b)-D_m(b,a)\big|_R \quad \longrightarrow 0    \end{array}
    \end{equation}
as $\hbar\to 0$, so the Dirac condition follows. 

As is seen, the thickness property is essential here. For thick quantizable domains, the quantum operators 
$\qq^D(a)$ correctly reproduce the classical algebra in the semiclassical limit: products and commutators converge to their classical counterparts. This ensures that the quantization is consistent and physically meaningful, with boundary effects vanishing in the classical regime.
\end{proof}

At this point, one may wonder about the analogue of the separability condition ($iii$) from Proposition \ref{P1}. This condition does not generally appear to hold for arbitrary compact groups and physical observables. However, below we present an important case where it does apply. 

\begin{example}
    Consider an open  ball $B^d_R\subset\gs$ defined by the quadratic Casimir element:
    \begin{equation}
        R^2-x_1^2-\cdots -x_d^2>0\,.
    \end{equation}
   Then for any $f\in \ff_t(G)$ we can write
    \begin{equation}\label{sum}
        \|\qq^{D}(f)\|^2_N=\frac1N\sum_{|\pi|<l}\dim \pi\sum_{i,j=1}^{\dim \pi}\int_Gdg\,\bar \pi_{ij}(g)\bar f(g)P_D \big(f(g)\pi_{ij}(g)\big)\,.
    \end{equation}
    Here the outer sum runs over all (equivalence classes of) representations $\pi\in \hat{G}$ whose Casimir length is less than $l=R/\hbar$. Consequently, the dimension of the state space $\hh_N^D=\ff_l(G)$ is  
    \begin{equation}\label{NN}
        N=\sum_{|\pi|<l}(\dim \pi)^2 \,.
        \end{equation}

Taking into account the Casimir length of $f$, we can split the outer sum in (\ref{sum}) into two parts: one corresponding to bulk states and the other to boundary states of filtration degree $t$:
\begin{equation}\label{1N}
\frac1N\sum_{|\pi|<l}(\ldots)=\frac1N\sum_{|\pi|<l-t}(\ldots)+\frac{1}{N}\sum_{ l-t\leq |\pi|<l}(\ldots)\,.
\end{equation}
The classical limit of the first sum is computed as
\begin{equation}
\begin{array}{c}
 \displaystyle  \lim_{\hbar\to 0} \frac1N\sum_{|\pi|<l-t}\dim \pi\sum_{i,j=1}^{\dim \pi}\int_Gdg\,\bar \pi_{ij}(g)\bar f(g)f(g)\pi_{ij}(g)\\[5mm]
 \displaystyle  =\lim_{\hbar\to 0} \frac1N\sum_{|\pi|<l-t}(\dim \pi)^2\int_Gdg|f(g)|^2=\lim_{\hbar\to 0} \frac{\sum_{|\pi|<R/\hbar-t}(\dim \pi)^2}{\sum_{|\pi|<R/\hbar}(\dim \pi)^2 
 }\int_Gdg|f(g)|^2\\[5mm]\displaystyle =\int_Gdg|f(g)|^2= \|f\|^2_D\,,
 \end{array}
\end{equation}
since the ratio tends to $1$ by the Weyl law (\ref{WR}).
Here we also used the fact that $\pi$ is  a unitary representation of $G$, so that $\bar \pi_{ij}(g)=\pi_{ji}(g^{-1})$ and
\begin{equation}
    \sum_{i,j=1}^{\dim \pi} \bar\pi_{ij}(g)\pi_{ij}(g)=\dim \pi\,.
\end{equation}
As to the second sum in (\ref{1N}), it vanishes in the classical limit due to the thickness of the ball domain. Indeed,
\begin{equation}
\begin{array}{c}
    \displaystyle  \lim_{\hbar\rightarrow0}\frac1N\sum_{l-t\leq|\pi|<l}\dim \pi\sum_{i,j=1}^{\dim \pi}\int_Gdg\,\bar \pi_{ij}(g)\bar f(g)P_D f(g)\pi_{ij}(g)\\[5mm]
  \displaystyle  =  \lim_{\hbar\to 0}\frac1N\sum_{l-t\leq|\pi|<l}\dim \pi\sum_{i,j=1}^{\dim \pi}\|\qq^D(f)\pi_{ij}\|^2\leq\|\qq(f)\|^2 \lim_{\hbar\rightarrow 0}\frac1N   \sum_{R/\hbar-t\leq|\pi|<R/\hbar}(\dim \pi)^2 \\[5mm]
 \displaystyle  \leq|f|_\infty^2\lim_{\hbar\to 0}\frac{(R-\hbar t)^d-R^d}{R^d}=0\,.
  \end{array}
      \end{equation}
In the last step, we used (\ref{NN}) together with the Weyl law (\ref{WR}). Thus, 
\begin{equation}\label{=}
    \lim_{\hbar\to 0}\|\qq^D(f)\|^2_N=\|f\|^2_D
\end{equation}
for $D=B_R^d$ and $f\in \ff(G)$. 

From a physical viewpoint, one can regard $f$ as the Hamiltonian of a free particle whose  phase space is $B_R^d\times G$. In this interpretation, equality (\ref{=}) states that the classical limit of the  root mean square value of the quantum energy spectrum coincides with that of the classical Hamiltonian $f(g)$. This shows that the quantization scheme respects the correspondence principle: the statistical properties of the spectrum approach those of the underlying classical system once quantum fluctuations are suppressed.

\end{example}

Using the concept of bulk and boundary states, it is not hard to generalize Proposition \ref{P2} from one-dimensional intervals to  arbitrary balls in higher dimensions.

\begin{proposition}
    Let $D=B^d_R\subset\gs$ be an open ball and let $\hh$ be the space of representative functions on $G$, or its $L^2$-completion. Then for any 
    $a,b\in \fa$ and $\psi\in \hh$, the quantization map asymptotically reproduces classical multiplication and Poisson brackets in the sense that the following limits hold:
    \begin{equation}\label{cllim}
    \begin{array}{l}
     \displaystyle   \lim_{\hbar\to 0} \big\|  \big(\qq^D(a)\qq^D(b)-\qq^D(ab)\big) \psi  \big\|=0\,,\\[5mm]
   \displaystyle     \lim_{\hbar\to 0} \big\|  \big((i\hbar)^{-1}[\qq^D(a),\qq^D(b)]-\qq^D(\{a,b\})\big) \psi  \big\|=0\,.
       \end{array}
       \end{equation}
\end{proposition}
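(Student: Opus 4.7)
The plan is to extend the bulk-state analysis of the preceding theorem to arbitrary $\psi\in\hh$ via a Peter--Weyl truncation combined with a direct estimate of the boundary correction, following the structure of the proof of Proposition~\ref{P2}. Fix $a,b\in\fa^{(t)}$ and $\psi\in\hh$, and given $\varepsilon>0$ decompose $\psi=\psi_L+\tilde\psi$, where $\psi_L=\sum_{|\pi|\le L}\psi_\pi$ is the Peter--Weyl truncation at Casimir length $L$ and $\|\tilde\psi\|<\varepsilon$. Once $\hbar$ is small enough that $R/\hbar>L+2t$, the truncation $\psi_L$ lies in the bulk subspace $\hh^D_{(2t)}$, so the reasoning in the proof of the preceding theorem gives
\begin{equation*}
\bigl\|(\qq^D(a)\qq^D(b)-\qq^D(ab))\psi_L\bigr\|=\mathcal{O}(\hbar),\qquad \bigl\|((i\hbar)^{-1}[\qq^D(a),\qq^D(b)]-\qq^D(\{a,b\}))\psi_L\bigr\|=\mathcal{O}(\hbar).
\end{equation*}

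For the tail $\tilde\psi$, one inserts $P_D+(I-P_D)=I$ to split
\begin{equation*}
\qq^D(a)\qq^D(b)=P_D\qq(a)\qq(b)P_D-P_D\qq(a)(I-P_D)\qq(b)P_D,
\end{equation*}
and analogously for the commutator. The prequantum piece $P_D\qq(a)\qq(b)P_D$, compared with $\qq^D(ab)=P_D\qq(ab)P_D$ via the star-product expansion (\ref{star}), contributes $\mathcal{O}(\hbar)\cdot\varepsilon$ to the residue. The boundary correction $P_D\qq(a)(I-P_D)\qq(b)P_D\tilde\psi$ is controlled by observing that the operators $\qq(x_c)$ commute with the Laplacian $\mathbb{L}$ and hence preserve Casimir-length isotypic components: the factor $(I-P_D)\qq(b)P_D\tilde\psi$ is therefore supported in the thin shell of Casimir lengths in $(R/\hbar,\,R/\hbar+l_b]$, where $l_b$ denotes the Casimir length of the $G$-factor of $b$, and its norm is governed by the Peter--Weyl mass of $P_D\tilde\psi$ in the dual shell $(R/\hbar-l_b,\,R/\hbar]$, which tends to $0$ as $\hbar\to 0$ by $L^2$-summability of $\psi$. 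Taking $\hbar\to 0$ and then $\varepsilon\to 0$ yields the first limit in (\ref{cllim}); the Dirac limit follows by the same scheme, the prequantum commutator contributing $i\hbar\,\qq^D(\{a,b\})+\mathcal{O}(\hbar^2)$ by Proposition~\ref{P3}.

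The main technical obstacle is the Dirac identity, where the prefactor $\hbar^{-1}$ magnifies any near-boundary mass of $\psi$. For $\psi$ a representative function this shell is empty for $\hbar$ small enough and the correction vanishes identically; for $\psi$ in the $L^2$-completion one exploits the spherical geometry of $D=B^d_R$, in particular that the shell width $l_a+l_b$ remains fixed as $\hbar\to 0$, so that a coupled passage to the limit in $(L,\varepsilon,\hbar)$ with $L$ chosen to grow appropriately with $R/\hbar$ suffices. Without this $\hbar$-independent shell width, afforded specifically because the defining polynomial of $D$ is the quadratic Casimir, the boundary bound would not absorb the $\hbar^{-1}$ factor and the argument would not generalise straightforwardly to arbitrary quantizable semialgebraic domains.
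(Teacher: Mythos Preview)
Your strategy coincides with the paper's: use the Peter--Weyl expansion of $\psi$ and the decay $|\psi_\pi^{ij}|\to 0$ to kill the boundary contribution, and use the star-product expansion of Proposition~\ref{P3} on the bulk. The paper, following the template of Proposition~\ref{P2}, proceeds somewhat more directly by decomposing $P_D\psi$ (rather than $\psi$) into an edge piece supported in the thin shell $(R/\hbar-2t,\,R/\hbar]$ and a bulk piece lying in $\hh^D_{(2t)}$; the bulk is handled by the preceding theorem and the edge contribution is bounded by the operator norm times the shell mass, which vanishes as the shell drifts to infinity in the spectrum. Your preliminary truncation $\psi=\psi_L+\tilde\psi$ with $\|\tilde\psi\|<\varepsilon$ is redundant: once you isolate the boundary correction on $\tilde\psi$, you bound it by the mass of $\psi$ in the very same shell $(R/\hbar-l_b,\,R/\hbar]$, so the parameter $\varepsilon$ never actually enters the final estimate and the ``$\varepsilon\to 0$'' step can be dropped.

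Your diagnosis of the obstacle in the Dirac limit is accurate and matches the paper's implicit stance. The prefactor $\hbar^{-1}$ demands that the edge mass be $o(\hbar)$ rather than merely $o(1)$; for $\psi\in\fg$ the shell is empty once $\hbar$ is small enough and the correction vanishes identically, which is the clean case. For general $\psi\in L^2(G)$, however, square-summability alone does not force the shell mass to be $o(\hbar)$ (take, in the one-dimensional model, $c_k\sim 1/k$: then $\hbar^{-1}|c_{N}|$ tends to a nonzero constant), so the ``coupled passage to the limit'' you sketch does not close the gap without an additional decay hypothesis on the Fourier coefficients. The paper's sketch is equally brief on this point.
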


The proof exploits the Casimir filtration in $\fg$, which allows one to represent each function of $\hh$ as a generalized Fourier series 
\begin{equation}
    \psi(g)=\sum_{l}\sum_{|\pi|=l}\psi_{\pi}^{ij}\sqrt{\dim\pi}\,\pi_{ij}(g)\,,
\end{equation}
where the inner sum runs over all (equivalence classes of) representations $[\pi]\in \hat{G}$ of Casimir length $l$.  Square summability, 
\begin{equation}
    \|\psi\|^2=\sum_l\sum_{|\pi|=l}\sum_{i,j}|\psi_\pi^{ij}|^2<\infty\,,
\end{equation}
implies that $|\psi_{\pi}^{ij}|\to 0$ as $|\pi|\to\infty$. This decay condition suppresses the contribution of boundary states to the classical limits (\ref{cllim}).

\begin{theorem}Let $D(d_1,\ldots,d_n)$ be a bounded and quantizable semialgebraic  domain and let $f\in \mathrm{Pol}(\gs)$ be a polynomial strictly positive on the topological closure of 
$D$, i.e., $f|_{\bar D}>0$.  Then, for each $\psi \in \hh^D$ and each $\varepsilon>0$, there exists a constant $c>0$ such  that 
   $$
   \big(\psi, \qq^D(f)\psi\big) >-\varepsilon 
   $$
   for all $\hbar<c$. 
\end{theorem}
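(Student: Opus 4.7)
The plan is to invoke Schm\"udgen's Positivstellensatz (Theorem~\ref{T1}) to realise $f$ as a manifestly nonnegative combination of the defining polynomials, then quantise this sum-of-squares ``certificate'' and show that the quantisation introduces only corrections of order $\hbar$. Since $\bar D$ is compact and $f|_{\bar D}>0$, Theorem~\ref{T1} supplies a finite representation
$$
f \;=\; \sum_{\alpha} s_\alpha^{\,2}\, d_1^{e_1^\alpha}\cdots d_n^{e_n^\alpha},\qquad s_\alpha\in\pol,\quad e_A^\alpha\in\{0,1\}.
$$
By linearity it therefore suffices to prove the required lower bound for a single term $s_\alpha^{\,2}d^\alpha$ of this form, where I write $d^\alpha:=d_1^{e_1^\alpha}\cdots d_n^{e_n^\alpha}$.

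For each such term I would introduce the ``ordered'' operator on $\hh^D$
$$
T_\alpha \;:=\; \qq^D(s_\alpha)\,\qq^D(d_1)^{e_1^\alpha}\cdots \qq^D(d_n)^{e_n^\alpha}\,\qq^D(s_\alpha)
$$
and verify that $(\psi,T_\alpha\psi)\geq 0$ for every $\psi\in\hh^D$. Quantisability forces the $\qq(d_A)$ to pairwise commute, so they admit a joint spectral decomposition: this shows both that $P_D$ commutes with each $\qq(d_A)$ and that $\prod_A\qq^D(d_A)^{e_A^\alpha}$ is a positive operator on $\hh^D$ (the spectrum of each $\qq^D(d_A)$ on $\hh^D$ is strictly positive by the very definition of $\hh^D$). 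A direct computation using commutativity shows that the middle projectors inside $T_\alpha$ collapse when acting on vectors in $\hh^D$:
$$
T_\alpha\psi \;=\; P_D\,\qq(s_\alpha)\,\qq(d_1)^{e_1^\alpha}\cdots\qq(d_n)^{e_n^\alpha}\,\qq(s_\alpha)\,\psi,
$$
so that $(\psi,T_\alpha\psi)=(\phi,\prod_A\qq(d_A)^{e_A^\alpha}\phi)\geq 0$ with $\phi:=\qq^D(s_\alpha)\psi\in\hh^D$ and a positive $\prod_A\qq(d_A)^{e_A^\alpha}\!\mid_{\hh^D}$.

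The comparison between $T_\alpha$ and $\qq^D(s_\alpha^{\,2}d^\alpha)$ then proceeds via the $\ast$-product expansion of Proposition~\ref{P3}: the difference of the inner operators takes the form $\sum_{k\geq 1}\hbar^k\qq(r_k^\alpha)$ with each $r_k^\alpha$ a polynomial of strictly lower total filtration degree than $s_\alpha^{\,2}d^\alpha$. Assuming, as per the convention in Section~3, that $D$ is presented together with the Casimir constraint $R_0^2-C_2>0$ where $R_0$ is the radius of $D$, we have $\hh^D\subset\hh^{R_0/\hbar}$, and the $R$-norm bound (\ref{QR}) with $R=R_0$ gives $|(\psi,\qq^D(r_k^\alpha)\psi)|\leq |r_k^\alpha|_{R_0}\|\psi\|^2$ uniformly in $\hbar\in(0,1]$. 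Summing over $\alpha$ and $k\geq 1$ produces
$$
(\psi,\qq^D(f)\psi) \;\geq\; \sum_\alpha(\psi,T_\alpha\psi)\;-\;C\hbar\|\psi\|^2 \;\geq\; -C\hbar\|\psi\|^2,
$$
with a constant $C=C(f,d_A,R_0,\psi)$ independent of $\hbar$, and the choice $c:=\varepsilon/(C\|\psi\|^2)$ then closes the argument.

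The main technical hurdle is this final comparison step: one must carefully track how the symmetrically ordered quantisation $\qq(s_\alpha^{\,2}d^\alpha)$ transforms, under iterated application of the $\ast$-product, into the strictly ordered product $\qq(s_\alpha)\qq(d_1)^{e_1^\alpha}\cdots\qq(d_n)^{e_n^\alpha}\qq(s_\alpha)$, and confirm that the discrepancy is genuinely a combination of quantisations of polynomials of strictly lower total filtration degree carrying explicit $\hbar$-powers. The commutativity of the $\qq(d_A)$'s is essential throughout: it is precisely what collapses the middle $P_D$'s in $T_\alpha|_{\hh^D}$ (so that the ``leakage'' of $\qq(s_\alpha)\psi$ into $(\hh^D)^\perp$ cannot contribute a harmful negative part) and simultaneously keeps the $\ast$-product remainders controllable via the $R$-norm estimate (\ref{QR}).
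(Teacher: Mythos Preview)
Your overall strategy matches the paper's: invoke Schm\"udgen, reorder the prequantised factors into a manifestly positive form, and bound the $O(\hbar)$ remainder via the $R$-norm estimate~(\ref{QR}). The paper works entirely at the prequantisation level, rewriting
\[
\qq(f)=\sum_{e}\qq(f_{e})\,\big(\qq(d_1)\big)^{e_1}\cdots\big(\qq(d_n)\big)^{e_n}\,\qq(f_{e})+\hbar\,\qq(r_\hbar)
\]
and then using $(\psi,\qq^D(f)\psi)=(\psi,\qq(f)\psi)$ for $\psi\in\hh^D$, without ever inserting interior projectors; you instead route through the projected operator $T_\alpha$.

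Your ``collapsing'' claim, however, does not hold as stated. Commutativity of $P_D$ with the $\qq(d_A)$'s removes the projectors flanking the $d$-factors, but \emph{not} the one sitting between $\prod_A\qq(d_A)^{e_A^\alpha}$ and the rightmost $\qq(s_\alpha)$: one only gets
\[
T_\alpha\psi \;=\; P_D\,\qq(s_\alpha)\,\textstyle\prod_A\qq(d_A)^{e_A^\alpha}\,P_D\,\qq(s_\alpha)\,\psi,
\]
with the inner $P_D$ intact. Your positivity identity $(\psi,T_\alpha\psi)=(\phi,\prod_A\qq(d_A)^{e_A^\alpha}\phi)\geq0$ for $\phi=\qq^D(s_\alpha)\psi$ survives this correction, but the $\ast$-product expansion of Proposition~\ref{P3} only relates $\qq(s_\alpha^2d^\alpha)$ to the \emph{unprojected} ordered product $\qq(s_\alpha)\prod_A\qq(d_A)^{e_A^\alpha}\qq(s_\alpha)$. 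The gap between $P_D(\,\cdot\,)P_D$ of the latter and your $T_\alpha$ is exactly the leakage contribution $(\phi^\perp,\prod_A\qq(d_A)^{e_A^\alpha}\phi^\perp)$ with $\phi^\perp=(1-P_D)\qq(s_\alpha)\psi$; this is not of order $\hbar$ and its sign is uncontrolled --- contrary to your closing remark, commutativity of the $\qq(d_A)$'s does \emph{not} make it vanish. The paper's route sidesteps this extra comparison entirely by asserting nonnegativity of $(\qq(f_{e})\psi,\prod_A\qq(d_A)^{e_A}\qq(f_{e})\psi)$ directly for the unprojected vectors $\qq(f_{e})\psi$, thereby absorbing what you call the leakage into the positive term rather than into the remainder.
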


\begin{proof}
By Theorem \ref{T1}, any polynomial $f(x)$ positive on $\bar D$ belongs to the semiring $S(D)$, meaning that 
\begin{equation}
    f=\sum_{e_k\in \{0,1\}} f^2_{e_1e_2\cdots e_d}d_1^{e_1}d_2^{e_2}\cdots d_n^{e_n}
\end{equation}
for some polynomials $f_{e_1e_2\cdots e_k}(x)$. Applying the prequantization map, we can bring the operator $\qq(f)$ into the form
$$
\qq(f)=\sum_{e_k\in \{0,1\}} \qq(f_{e_1\cdots e_n})\big(\qq(d_{1})\big)^{e_1}\cdots \big(\qq(d_n)\big)^{e_n}\qq(f_{e_1\cdots e_n})+\hbar \qq(r_\hbar)\,,
$$
where $r_\hbar$ is some polynomial in the $x$'s and $\hbar$. Then, for any normalized state $\psi\in \hh^D$,
\begin{equation}\label{sum}
\begin{array}{c}
  \displaystyle   \big(\psi, \qq^D(f)\psi\big)=\sum_{e_k\in \{0,1\}} \big(\qq(f_{e_1\cdots e_n})\psi, \big(\qq(d_1)\big)^{e_1}\cdots \big(\big(\qq(d_n)\big)^{e_n}\qq(f_{e_1\cdots e_n})\psi\big) \\[5mm]+\hbar \big(\psi, \qq(r_\hbar)\psi\big)\,.
    \end{array}
\end{equation}
By the Schwarz inequality and Rel. (\ref{QR}), 
\begin{equation}
    \big|\big(\psi,\qq(r_\hbar)\psi\big)\big|^2\leq \|\qq(r_\hbar)\|^2\leq \|\qq(r_\hbar)\|^2\leq |r_\hbar|^2_R\,,
\end{equation}
where $R$ is the radius of the domain $D$. Since the operators $\big(\qq(d_1)\big)^{e_1}\cdots \big(\big(\qq(d_n)\big)^{e_n}$ are positive semidefinite on $\hh^D$, the sum over the $e_k$'s in (\ref{sum}) is non-negative, while the last term tends to zero as $\hbar\rightarrow 0$. Therefore, we can take $c =\varepsilon/(b+\varepsilon) \leq 1$, where
$$
b=\max_{0\leq \hbar\leq 1}|r_\hbar|_R\,.
$$
\end{proof}

Heuristically, the theorem shows that a quantum particle cannot be detected far beyond the domain $D$, where the observable $f$ may take large negative values. In the classical limit, the particle becomes increasingly confined within
$D$. Of course, since the position operators do not commute, the particle's location in space is intrinsically uncertain. Consistently, inequality (\ref{qx}) indicates that the possible measurement outcomes of the particle’s coordinates 
$x^a$ do not exceed the radius of the domain $D$. 

\section{Discussion}

Let us briefly summarize the main results and indicate possible directions for future work. In this paper, we developed a quantization scheme for bounded symplectic domains associated with compact Lie groups. Specifically, we studied domains of the form $D\times G\hookrightarrow T^\ast G$, where $G$ is a compact Lie group and $D$ is a bounded region in the dual space $\mathfrak{g}^\ast$ of the corresponding Lie algebra. The finiteness of the symplectic volume implies that the resulting quantum theory is described by a finite-dimensional Hilbert space, with observables represented by finite Hermitian matrices.
Particular attention was devoted to a detailed understanding of the classical limit.

It is worth emphasizing that most of the existing literature on quantization focuses on smooth symplectic manifolds, whether compact or noncompact. In contrast, in our setting, the presence of the boundary $\partial \bar D \times G$ gives rise to specific boundary effects that cannot be ignored. These effects require a careful reconsideration -- and indeed a modification -- of the standard von Neumann and Dirac conditions that usually guide the quantization procedure.
As a consequence, the quantization scheme developed here does not coincide with any single known approach. Rather, it combines elements of several: deformation quantization at the prequantization stage, projection onto the state space as in Berezin--Toeplitz quantization, and the construction of the operator norm in the spirit of strict quantization. Moreover, it reveals noteworthy links with semialgebraic geometry. This hybrid character highlights both the novelty of the framework and the richness of the underlying mathematical structures. 

A natural direction for further research is to extend the present scheme by analyzing the separability condition from Proposition \ref{P1} in the context of more general quantizable domains and observables. In particular, a systematic analysis of the instances and forms in which separability holds would further clarify the nature of the classical limit. Equally intriguing would be to explore possible connections with {\it noncommutative} semialgebraic geometry, where related structural phenomena may appear in a broader algebraic framework \cite{helton2004positivstellensatz, schmudgen2008noncommutative}.

Finally, the proposed quantization scheme appears promising from the perspective of physical applications, as it provides a consistent framework for quantizing systems with compact momentum {\it and} configuration spaces. In this setting, the phenomenon of supertunneling, identified here, appears to be new and merits further investigation. Equally significant is the emergence of a maximal density of the fermionic gas, pointing to novel physical effects that may arise in systems with compact momentum space. In particular, the concept of a maximal fermion density could offer a natural resolution of the singularity problem in the collapse of massive stars and other compact astrophysical objects; in this scenario, the singularity is avoided not through speculative quantum gravity effects (for which no complete theory yet exists), but rather through the model-independent Pauli exclusion principle built into quantum mechanics itself.  The limiting volume is determined by the product of the minimal spatial cell, $(2\pi \hbar)^3/\mathrm{Vol}(G)$, and the number $N$ of fermions involved in the collapse. We intend to explore these questions in future work.

\section*{Acknowledgements}This work was supported by the research project FSWM-2025-0007 of the Ministry of Science and Higher Education of the Russian Federation.
\appendix 

\section{Proofs of Propositions \ref{P1} and \ref{P2}}\label{A}

It is convenient to introduce the operator norm on $\mathscr{L}(\hn)$. Let $\|\psi\|$ denote the norm of a state  $\psi\in \hn$, then for every $A\in\mathscr{L}(\hn)$ we set
\begin{equation}
    \|A\|=\max_{\|\psi\| =1}\| A\psi\| \,.
\end{equation}
For example, 
\begin{equation}\label{zf}
    \|\widehat {z^n}\|=1\quad \mbox{for}\quad |n|<N
 \qquad \mbox{and}\qquad \|f(\hat x)\|\leq |f|_\infty\,,
\end{equation}
where $f(x)$ is any polynomial and 
\begin{equation}
    |f|_\infty=\max_{x\in [0,L]} |f(x)|\,.
\end{equation}
Both relations (\ref{zf}) follow directly from the definition of the operators $\widehat {z^n}$ and $f(\hat x)$  in the orthonormal basis (\ref{base}): 
\begin{equation}
    \widehat {z^n}\psi_k=\chi_N(k+n)\psi_{k+n}\,, \qquad   f(\hat x)\psi_k =f(x_k)\psi_k \,.
    \end{equation} 
Thus, the operators $\widehat {z^n}$ and $f(\hat x)$ are  uniformly bounded on the sequences of spaces $\hn$. Since the eigenvalues $\{x_1,x_2,\ldots, x_N\}\subset (0,L]$ of the operator $\hat x$ constitute a dense subset as $N\rightarrow\infty$, we conclude that
\begin{equation}\label{fN}
    \lim_{N\rightarrow \infty}\|f(\hat x)\|=|f|_\infty\,.
\end{equation}

The operator and normalized Frobenius norms are related by the inequality 
\begin{equation}
    \|A\|_N\leq \|A\|\,,\qquad \forall{A\in \mathscr{L}(\hn)}\,.
\end{equation}
As mentioned in footnote \ref{f1}, the normalized Frobenius norm is not submultiplicative, in contrast to the operator norm. However, a kind of submultiplicativity can be restored if we consider both norms together. Specifically, the following inequalities hold \cite{vonNeumann1942, CIT-006}:
\begin{equation}\label{sm}
\begin{array}{ll}
   \|AB\|_N\leq \|A\|\cdot\|B\|_N\,, &\qquad  \|AB\|_N\leq \|B\| \cdot\|A\|_N\,, \\[5mm] 
  \|AB\|\leq \|A\|\cdot\|B\|\,, &\qquad \|AB\|_N\leq \|B\| \cdot\|A\|\,.\end{array}
  \end{equation}
With these preliminaries, we are ready to prove Propositions \ref{P1} and \ref{P2}.

\paragraph{Separability condition.} Consider first the case where the classical observable is given by a polynomial $f(x)$. Using the orthonormal basis $\big\{\psi_k(p)=e^{\frac{2\pi i k p}{M}}\big\}_{k=1}^N$ in $\hn$, we find
\begin{equation}
    \|\hat f\|^2_N=\frac{1}{N}\sum_{k=1}^N(f(\hat x)\psi_k,f(\hat x)\psi_k) =\frac{1}{N}\sum_{k=1}^N(f(x_k)\psi_k,f(x_k)\psi_k)  =\frac{1}{N}\sum_{k=1}^N |f(x_k)|^2\,,
\end{equation}
whence 
\begin{equation}\label{fN}
\displaystyle    \lim_{N\rightarrow \infty}  \|\hat f\|^2_N=\frac1L\lim_{N\to\infty} \frac{L}{N}\sum_{k=1}^N |f(x_k)|^2 =\frac1L\int_0^L|f(x)|^2dx=\|f\|^2_D\,.
\end{equation}
A general classical observable of $\fa$ is represented by a finite series 
             \begin{equation}
             a(x,p)=\sum_{n\in\mathbb{Z}} a_n (x) z^n\,,
             \end{equation}
where only finitely many polynomials $a_n(x)$ are different from zero. Its norm squared is
             \begin{equation}
                 \|a\|^2_D=\frac1{LM}\int |a(x,p)|^2dxdp=\frac1{L}\sum_{n\in \mathbb{Z}}\int |{a}_n(x)|^2dx\,.
             \end{equation}
             It follows from the commutation relation $[\hat x,\hat z ]=(2\pi\hbar/M)\hat{z}$ that for any polynomial $f(x)$,
             \begin{equation}\label{fz}
                 f(\hat x)\hat{z}^n= \hat{z}^nf(\hat x+n\Delta x)\,,  \qquad \Delta x:=\frac{2\pi\hbar}{M}=\frac LN\,.  
             \end{equation}
    This identity allows us to express any quantum observable in $zx$-order form as 
        \begin{equation}\label{zx}
                \hat{a}=\sum_{n\in \mathbb{Z}}\hat{z}^n a^\hbar_n(\hat{x}\big)\,, 
                \end{equation}
                where  
                \begin{equation}\label{ah}
                    {a}^\hbar_n(x):=a_n(x)+\Delta a_n(x)\,,\qquad \Delta a_n(x):=\frac12\big(a_n(x+n\Delta x)-a_n(x)\big)\,.
                \end{equation}
Clearly,  $\Delta x\rightarrow 0$ and $a^\hbar_n(x)\rightarrow a_n(x)$ as $\hbar\rightarrow 0$. 
                
On the other hand, 
             \begin{equation}
             \begin{array}{c}
              \displaystyle   \|\hat{a}\|^2_N=\frac1{N}\sum_{k=1}^N (\hat a\psi_k,  \hat a\psi_k)=  
                 \frac1{N}\sum_{k=1}^N\sum_{n,m\in\mathbb{Z}} \big(\hat{z}^na^\hbar_n(x_k)\psi_k,  \hat{z}^ma^\hbar_m(x_k)\psi_k\big) \\[7mm]               
 \displaystyle  =\frac1{N}\sum_{k=1}^N\sum_{n,m\in\mathbb{Z}}\overline {a^\hbar_n(x_k)}a^\hbar_m(x_k)\big(\hat{z}^n\psi_k,  \hat{z}^m\psi_k\big) \\[7mm]
 \displaystyle  =\frac1{N}\sum_{k=1}^N\sum_{n,m\in\mathbb{Z}}\overline{ a^\hbar_n(x_k)}a^\hbar_m(x_k)\chi_N(k+n)\chi_N(k+m)\big(\psi_{k+n},  \psi_{k+m}\big)\\[7mm]
 \displaystyle =\frac1{N}\sum_{n\in\mathbb{Z}}\sum_{k=1}^{N} \chi_N(k+n)|a^\hbar_n(x_k)|^2  \,,                   
 \end{array}
                   \end{equation}
so that 
     \begin{equation}
  \lim_{N\to \infty}\|\hat{a}\|^2_N=\frac1{L}\sum_{n\in\mathbb{Z}}\lim_{N\rightarrow \infty}\sum_{k=1}^{N}\frac{L}{N}\chi_N(k+n)|a^\hbar_n(x_k)|^2 
    =\frac1{L}\sum_{n\in\mathbb{Z}}\int_0^L|a_n(x)|^2 dx =\|a\|_D^2\,.  \end{equation}       

\paragraph{Von Neumann's condition.} 
Let us first consider the special case where $a=z^n$ and $b=z^m$. Then
  \begin{equation}
      \|\hat z^n \hat{z}^m-\widehat{z}^{n+m}\|^2_N=\frac{1}{N}\sum_{k=1}^N (\varphi_k,\varphi_k)\,,
  \end{equation}
  where 
  \begin{equation}
      \varphi_k=(\hat z^n \hat{z}^m-\widehat{z}^{n+m})\psi_k=\chi_N(k+n+m)(\chi_N(k+m)-1)\psi_{k+n+m}\,.
      \end{equation}
 Hence,
       \begin{equation}\label{zzz}
      \|\hat z^n \hat{z}^m-\widehat{z}^{n+m}\|^2_N\leq \frac{|m|}{N}\,,\qquad \mbox{so that }\qquad \lim_{N\rightarrow\infty}\|\hat z^n \hat{z}^m-\widehat{z}^{n+m}\|_N=0\,. 
  \end{equation}
  By applying the triangle inequality for the norm, we observe that the operators $\hat z^n$ and $\hat z^m$ asymptotically commute:
  \begin{equation}\label{zzN}
      \lim_{N\to\infty}\|\hat z^n\hat z^m-\hat z^m\hat z^n\|_N=0\,.
  \end{equation}
  
  Next, consider the general case where the classical observables are 
 \begin{equation}\label{ab}
  a=\sum_{n\in\mathbb{Z}}z^na_n(x)\,,\qquad b=\sum_{m\in\mathbb{Z}}z^mb_m(x)\,.
\end{equation}
  To establish von Neumann's property, it suffices  to examine the individual terms of these polynomials, taking advantage of the norm's subadditivity.  
  Using  (\ref{zx}) and (\ref{ah}), we can write the corresponding quantum observables as 
  \begin{equation}\label{za}
      \widehat{z^na_n(x)}=\hat z^n a_n^\hbar(\hat x)\,,\qquad  \widehat{z^mb_m(x)}=\hat z^m b_m^\hbar(\hat x)  \,,
      \end{equation}
      which leads to 
      \begin{equation}\label{dif}
      \begin{array}{c}
          (\widehat{z^na_n} ) (\widehat{z^mb_m})- \widehat{(z^{n+m}a_nb_m)}  \\[5mm]
     =\hat z^n\hat z^ma_n^\hbar(\hat x+m\Delta x)b_m^\hbar(\hat x)-\hat z^{n+m}a^\hbar_n(\hat x+m\Delta x) b_m^\hbar(\hat x+n\Delta x) \\[5mm]
      =\big[\hat z^n\hat z^m -\hat z^{n+m}\big]a_n^\hbar(\hat x+m\Delta x)b_m^\hbar(\hat x)+\hat z^{n+m}a^\hbar_n(\hat x+m\Delta x) \big[b_m^\hbar(\hat x)-b_m^\hbar(\hat x+n\Delta x) \big ]   \,.  
      \end{array}
      \end{equation}      The operators in square brackets can be regarded as ``asymptotically small''. Indeed, using subadditivity and submultiplicativity  (\ref{sm}) of the norm,  we can write 
      \begin{equation}
      \begin{array}{c}
            \|(\widehat{z^na_n} ) (\widehat{z^mb_m})- \widehat{z^{n+m}a_nb_m}\|_N   \leq \|a_n^\hbar(\hat x+m\Delta x)b_m^\hbar(\hat x)\| \cdot \|\hat z^n\hat z^m -\hat z^{n+m} \|_N\\[5mm]
            +\|\hat z^{n+m}\|\cdot\|a^\hbar_n(\hat x+m\Delta x)\|\cdot \|b_m^\hbar(\hat x)-b_m^\hbar(\hat x+n\Delta x)  \|\,.
            \end{array}
            \end{equation}
Applying  Eqs. (\ref{zf}) and (\ref{zzz}) to the terms on the right yields 
    $$ \begin{array}{l}
       \displaystyle  \lim_{N\to \infty}\|a_n^\hbar(\hat x+m\Delta x)b_m^\hbar(\hat x)\| \cdot \|\hat z^n\hat z^m -\hat z^{n+m} \|_N\leq |a_nb_m|_\infty \lim_{N\to\infty} \|\hat z^n\hat z^m -\hat z^{n+m} \|_N=0 \,,    \\[5mm]
      \displaystyle       \lim_{N\to\infty}\|\hat z^{n+m}\|\cdot\|a^\hbar_n(\hat x+m\Delta x)\|\cdot \|b_m^\hbar(\hat x)-b_m^\hbar(\hat x+n\Delta x)  \|       \leq |a_n|_\infty |n b_m'|_\infty\lim_{N\rightarrow \infty} |\Delta x| =0\,,
             \end{array}
             $$
             where
             $$
             |nb'_m|_\infty=\lim_{N\to \infty}\left\|\frac{b_m^\hbar(\hat x)-b_m^\hbar(\hat x+n\Delta x) }{\Delta x}\right\|\,.
             $$
             Hence,
             $$
              \|(\widehat{z^na_n} ) (\widehat{z^mb_m})- \widehat{(z^{n+m}a_nb_m)}\|_N  \to 0\quad \mbox{as}\quad N\to\infty\,.  
              $$
              
 \paragraph{Dirac's condition.} Using the subadditivity of the norm, it suffices to consider quantum observables of the form (\ref{za}). 
 For a fixed integer $l\geq 0$, define the subspace of ``bulk states'' $\hh_{N}^l=\mathrm{span}\big\{e^{\frac{2\pi i n p}{M}}\big\}_{n=l+1}^{N-l}\subset \hh_N$.  Then, for all $n,m\leq l$, one has 
 \begin{equation}
    \hat{z}{}^n\hat{z}{}^m\psi= \hat{z}{}^m\hat{z}{}^n\psi= \hat{z}{}^{n+m}\psi\,,   \qquad \forall \psi\in \hh_N^l\,.
    \end{equation}
 Moreover,
 \begin{equation}
 \lim_{\hbar \rightarrow 0}\frac{\dim\hh_N^l}{\dim\hh_N}=\lim_{\hbar \rightarrow 0}\frac{LM-4\pi \hbar l}{LM}=1\,.
 \end{equation}
 The commutator of the quantum observables (\ref{za}) takes the form
\begin{equation}
     [\widehat{z^na_n}, \widehat{z^mb_m}]=\hat z^n\hat z^ma_n^\hbar(\hat x+m\Delta x)b_m^\hbar(\hat x)-\hat z^m\hat z^nb_m^\hbar(\hat x+n\Delta x)a_n^\hbar(\hat x)\,. 
\end{equation}
 Since $\hh^l_N$ is invariant under $\hat x$, this can be written as
 \begin{equation}\label{com}
     [\widehat{z^na_n}, \widehat{z^mb_m}]\psi= \hat{z}{}^{n+m}\big(a_n^\hbar(\hat x+m\Delta x)b_m^\hbar(\hat x)-b_m^\hbar(\hat x+n\Delta x)a_n^\hbar(\hat x)  \big)\psi
     \end{equation}
   for all   $\psi\in \hh_N^l $. 
   
   On  the classical side, one computes
$$
\{z^na_n,z^mb_m\}=z^{n+m}\frac{2\pi i}{M}(ma'_n b_m-nb'_ma_n)\,,
$$
and quantizing this yields
$$
\widehat{\{z^na_n,z^mb_m\}}=\hat z^{n+m}\frac{2\pi i}{M}\big(ma'^\hbar_n(\hat x+m\Delta x) b^\hbar_m(\hat x +n\Delta x)-nb'^\hbar_m(\hat x+n\Delta x)a_n(\hat x+m\Delta x)\big) \,.
$$
Applying this operator to a normalized state $\psi\in \hh_N^l$ and subtracting it from (\ref{com}) multiplied by $(i\hbar)^{-1}$, we obtain
$$
\begin{array}{c}
\displaystyle \big\|\big((i\hbar)^{-1}[\widehat{z^na_n}, \widehat{z^mb_m}] -\widehat{\{z^na_n,z^mb_m\}}\big)\psi\big\|\\[7mm]
\displaystyle \leq\big\|\hat z^{n+m}\big\| \cdot\frac{2\pi}{M}\left\|\frac{a_n^\hbar(\hat x+m\Delta x)-a_n^\hbar(\hat x)}{\Delta x}b_m^\hbar(\hat x)- \frac{b_m^\hbar(\hat x+n\Delta x)-b_m^\hbar(\hat x)}{\Delta x}a_n^\hbar(\hat x) \right.\\[7mm]
\displaystyle -ma'^\hbar_n(\hat x+m\Delta x) b^\hbar_m(\hat x +n\Delta x)+nb'^\hbar_m(\hat x+n\Delta x)a^\hbar_n(\hat x+m\Delta x)\Big\|\\[7mm]
\displaystyle \leq\frac{2\pi}{M}\left\|\left(\frac{a_n^\hbar(\hat x+m\Delta x)-a_n^\hbar(\hat x)}{\Delta x}-ma_n'^\hbar(\hat x)\right)b_m^\hbar(\hat x)- \left(\frac{b_m^\hbar(\hat x+n\Delta x)-b_m^\hbar(\hat x)}{\Delta x}-nb'^\hbar_m(\hat x)\right)a_n^\hbar(\hat x) \right.\\[7mm]
\displaystyle -ma'^\hbar_n(\hat x+m\Delta x) \big(b^\hbar_m(\hat x +n\Delta x)-b^\hbar_m(\hat x)\big)+nb'^\hbar_m(\hat x+n\Delta x)\big( a^\hbar_n(\hat x+m\Delta x) - a^\hbar_n(\hat x)\big)\Big\|\\[7mm]
\displaystyle\leq\frac{2\pi}{M}\left|\frac{a_n^\hbar( x+m\Delta x)-a_n^\hbar( x)}{\Delta x}-ma_n'^\hbar( x)\right|_\infty |b_m^\hbar|_\infty+ 
\frac{2\pi}{M}\left|\frac{b_m^\hbar(x+n\Delta x)-b_m^\hbar( x)}{\Delta x}-nb'^\hbar_m( x)\right|_\infty |a_n^\hbar|_\infty \\[7mm]
\displaystyle+\big| ma'^\hbar_n( x+m\Delta x)\big |_\infty \big|b^\hbar_m(x +n\Delta x)-b^\hbar_m( x)\big|_\infty+\big|nb'^\hbar_m( x+n\Delta x)\big|_\infty\big| a^\hbar_n(x+m\Delta x) - a^\hbar_n( x)\big|_\infty\,.
\end{array}
$$
Here we used (\ref{zf}). In the last expression, each term vanishes individually as $\hbar\to 0$ and $N\to \infty$. Hence, for any pair of classical observables (\ref{ab}) with $n,m\leq l$, choosing $\hh^{a,b}_N=\hh^l_N$ ensures the Dirac condition.

Turning to Proposition \ref{P2}, we consider an interval $[u,v]$ with the boundary points $u$ and $v$ distinct from zero. 
The finite-dimensional state space $\hh_N\subset \hh$ is spanned by the Fourier harmonics 
\begin{equation}
\begin{array}{c}
\psi_n(p)=e^{\frac{2\pi inp}{M}}\,,\qquad N_1\leq n\leq N_2\,,\\[5mm]
\displaystyle N_1=\left[\frac{uM}{2\pi \hbar }\right]\,,\qquad N_2=\left[\frac{vM}{2\pi \hbar }\right] \,, \qquad N=N_2-N_1\,.
\end{array}
\end{equation}
A general state of $\hh$ is represented by a  Fourier series 
\begin{equation}
    \psi=\sum_{k\in \mathbb{Z}}c_k \psi_k\,,
\end{equation}
with $c_k\to 0 $ as $|k|\to \infty$. Define the operator measuring the deviation from classical multiplication:
$$
A:=(\widehat{z^na_n} ) (\widehat{z^mb_m})- (\widehat{z^{n+m}a_nb_m})\,.
$$
To control the “edge effects,” we orthogonally decompose $\hh_N$ into the edge and bulk states:
\begin{equation}\label{od}
\begin{array}{c}
    P_N\psi=\psi'+\psi''\,,\\[5mm]
    \displaystyle \psi'=\sum_{k=N_1}^{N_1+m} c_k\psi_k+\sum_{k=N_2-m}^{N_2} c_k \psi_k  \,,\qquad 
    \psi''=  \sum_{k=N_1+m+1}^{N_2-m-1}   c_k\psi_k \,.  
    \end{array}
     \end{equation}
Applying the triangle inequality, we find
\begin{equation}
 \|A P_N\psi\big\|  \leq\sum_{k=N_1}^{N_1+m}   |c_k|\cdot \|A\psi_k\big\|
    + \sum_{k=N_2-m}^{N_2}   |c_k|\cdot  \|A\psi_k\big\|  
     +     \|A\psi''\big\|     \,,
     \end{equation}
which can be further bounded as
     \begin{equation}
\|A P_N\psi\big\|\leq 2|a_n|_\infty |b_m|_\infty\left(\sum_{k=N_1}^{N_1+m}   |c_k|+\sum_{k=N_2-m}^{N_2} |c_k|\right)    +
     \|A \psi''\big\|  \,.
     \end{equation}
The edge contributions  vanish  in the classical limit, since $|c_k|\rightarrow 0$ as $|k|\rightarrow \infty$. The bulk term $\|A\psi''\|$
is estimated as in the above derivation of Dirac’s condition. Using Eq. (\ref{dif}), we obtain:
\begin{equation}
\begin{array}{c}
    \|A\psi''\|=\big\|\hat z^{n+m}a^\hbar_n(\hat x+m\Delta x) \big[b_m^\hbar(\hat x)-b_m^\hbar(\hat x+n\Delta x) \big ]\psi''\big\| \\[5mm]
   \displaystyle \leq \|\psi''\|\cdot \big\|a^\hbar_n(\hat x+m\Delta x)\big\|\cdot    \left\|\frac {b_m^\hbar(\hat x)-b_m^\hbar(\hat x+n\Delta x)}{\Delta x} \right\|  \cdot |\Delta x|  \,,\end{array}
    \end{equation}
so that 
\begin{equation}
    \lim_{N\to \infty} \|A\psi''\|\leq \|\psi''\| \cdot |a_n|_\infty |n b'_m|_\infty \lim_{N\to \infty} |\Delta x|=0\,.
\end{equation}
Thus, combining the edge and bulk contributions, we obtain
$
    \lim_{N\to\infty}\|AP_N\psi\|=0
$,
which establishes the first limit in Proposition \ref{P2}.
The second limit follows by the same argument using  decomposition \ref{od}.


\begin{thebibliography}{10}

\bibitem{woodhouse1992geometric}
N.~Woodhouse, {\em Geometric Quantization}.
\newblock Oxford mathematical monographs. Clarendon Press, 1992.

\bibitem{KM}
M.~V. Karasev and V.~P. Maslov, {\em Nonlinear Poisson Brackets: Geometry and Quantization}.
\newblock Translations of Mathematical Monographs, Vol. 119. American Mathematical Society, 1993.

\bibitem{fedosov1995deformation}
B.~Fedosov, {\em Deformation Quantization and Index Theory}.
\newblock Wiley, 1995.

\bibitem{landsman2012mathematical}
N.~P. Landsman, {\em Mathematical topics between classical and quantum mechanics}.
\newblock Springer Science \& Business Media, 2012.

\bibitem{floch2018brief}
Y.~Floch, {\em A Brief Introduction to Berezin--Toeplitz Operators on Compact K{\"a}hler Manifolds}.
\newblock CRM Short Courses. Springer International Publishing, 2018.

\bibitem{Born}
M.~Born, ``{A suggestion for unifying quantum theory and relativity},'' \href{http://dx.doi.org/10.1098/rspa.1938.0060}{{\em Proc. Roy. Soc. London A} {\bfseries 165} no.~921, (1938) 291--303}.

\bibitem{born1949reciprocity}
M.~Born, ``{Reciprocity Theory of Elementary Particles},'' \href{http://dx.doi.org/10.1103/revmodphys.21.463}{{\em Rev. Mod. Phys.} {\bfseries 21} no.~3, (1949) 463--473}.

\bibitem{Golfand:1962kjf}
Y.~A. Golfand, ``{Quantum field theory in constant curvature p-space},'' {\em Zh. Eksp. Teor. Fiz.} {\bfseries 43} no.~1, (1962) 256--267.

\bibitem{Tamm}
I.~E. Tamm, ``{On curved momentum space},'' in {\em {Proceedings of the International Conference on Elementary Particles}}, p.~314.
\newblock Kyoto, Kyoto University, 1965.

\bibitem{Kadyshevsky:1977mu}
V.~G. Kadyshevsky, M.~D. Mateev, R.~M. Mir-Kasimov, and I.~P. Volobuev, ``{Equations of Motion for the Scalar and the Spinor Fields in Four-Dimensional Noneuclidean Momentum Space},'' \href{http://dx.doi.org/10.1007/BF01032066}{{\em Theor. Math. Phys.} {\bfseries 40} (1979) 800--807}.


\bibitem{Amelino-Camelia:2011lvm}
G.~Amelino-Camelia, L.~Freidel, J.~Kowalski-Glikman and L.~Smolin,
``The principle of relative locality,''
Phys. Rev. D \textbf{84} (2011) 084010.


\bibitem{KOWALSKI_GLIKMAN_2013}
J.~Kowalski-Glikman, ``{Living in Curved Momentum Space},'' \href{http://dx.doi.org/10.1142/S0217751X13300147}{{\em Int. J. Mod. Phys. A} {\bfseries 28} (2013) 1330014}, \href{http://arxiv.org/abs/1303.0195}{{\ttfamily arXiv:1303.0195 [hep-th]}}.

\bibitem{Franchino-Vinas:2023rcc}
S.~Franchino-Vinas, S.~Mignemi, and J.~J. Relancio, ``{The beauty of curved momentum space},'' \href{http://dx.doi.org/10.22323/1.436.0340}{{\em Proc. Sci.} {\bfseries 436} (2023) 340}, \href{http://arxiv.org/abs/2303.08220}{{\ttfamily arXiv:2303.08220 [hep-th]}}.

\bibitem{Snyder1}
H.~S. Snyder, ``{Quantized space-time},'' \href{http://dx.doi.org/10.1103/PhysRev.71.38}{{\em Phys. Rev.} {\bfseries 71} (1947) 38--41}.

\bibitem{Snyder2}
H.~S. Snyder, ``{The Electromagnetic Field in Quantized Space-Time},'' \href{http://dx.doi.org/10.1103/PhysRev.72.68}{{\em Phys. Rev.} {\bfseries 72} (1947) 68--71}.

\bibitem{Seiberg_1999}
N.~Seiberg and E.~Witten, ``{String theory and noncommutative geometry},'' \href{http://dx.doi.org/10.1088/1126-6708/1999/09/032}{{\em JHEP} {\bfseries 09} (1999) 032}, \href{http://arxiv.org/abs/hep-th/9908142}{{\ttfamily arXiv:hep-th/9908142}}.

\bibitem{Douglas:2001ba}
M.~R. Douglas and N.~A. Nekrasov, ``{Noncommutative field theory},'' \href{http://dx.doi.org/10.1103/RevModPhys.73.977}{{\em Rev. Mod. Phys.} {\bfseries 73} (2001) 977--1029}, \href{http://arxiv.org/abs/hep-th/0106048}{{\ttfamily arXiv:hep-th/0106048}}.

\bibitem{Szabo:2001kg}
R.~J. Szabo, ``{Quantum field theory on noncommutative spaces},'' \href{http://dx.doi.org/10.1016/S0370-1573(03)00059-0}{{\em Phys. Rept.} {\bfseries 378} (2003) 207--299}, \href{http://arxiv.org/abs/hep-th/0109162}{{\ttfamily arXiv:hep-th/0109162}}.

\bibitem{Hersent:2022gry}
K.~Hersent, P.~Mathieu, and J.-C. Wallet, ``{Gauge theories on quantum spaces},'' \href{http://dx.doi.org/10.1016/j.physrep.2023.03.002}{{\em Phys. Rept.} {\bfseries 1014} (2023) 1--83}, \href{http://arxiv.org/abs/2210.11890}{{\ttfamily arXiv:2210.11890 [hep-th]}}.

\bibitem{kupriyanov2021poisson}
V.~G. Kupriyanov, ``Poisson gauge theory,'' {\em Journal of High Energy Physics} {\bfseries 2021} no.~9, (2021) 1--26.

\bibitem{Kupriyanov_2021}
V.~G. Kupriyanov and R.~J. Szabo, ``{Symplectic embeddings, homotopy algebras and almost Poisson gauge symmetry},'' \href{http://dx.doi.org/10.1088/1751-8121/ac411c}{{\em J. Phys. A} {\bfseries 55} no.~3, (2022) 035201}, \href{http://arxiv.org/abs/2101.12618}{{\ttfamily arXiv:2101.12618 [hep-th]}}.

\bibitem{Kupriyanov:2022ohu}
V.~G. Kupriyanov, M.~A. Kurkov, and P.~Vitale, ``{Poisson gauge models and Seiberg--Witten map},'' \href{http://dx.doi.org/10.1007/JHEP11(2022)062}{{\em JHEP} {\bfseries 11} (2022) 062}, \href{http://arxiv.org/abs/2209.13044}{{\ttfamily arXiv:2209.13044 [hep-th]}}.

\bibitem{Kupriyanov:2023gjj}
V.~G. Kupriyanov, M.~A. Kurkov, and P.~Vitale, ``{Lie--Poisson gauge theories and $\kappa$-Minkowski electrodynamics},'' \href{http://dx.doi.org/10.1007/JHEP11(2023)200}{{\em JHEP} {\bfseries 11} (2023) 200}, \href{http://arxiv.org/abs/2304.04857}{{\ttfamily arXiv:2304.04857 [hep-th]}}.

\bibitem{kupriyanov2024symplectic}
V.~G. Kupriyanov, A.~A. Sharapov, and R.~J. Szabo, ``{Symplectic groupoids and Poisson electrodynamics},'' {\em Journal of High Energy Physics} {\bfseries 2024} no.~3, (2024) 1--35.

\bibitem{sharapov2024poisson}
A.~A. Sharapov, ``Poisson electrodynamics with charged matter fields,'' {\em Journal of Physics A: Mathematical and Theoretical} {\bfseries 57} no.~31, (2024) 315401.

\bibitem{FSK} Fatollahi, A. H., Shariati, A., and Khorrami, M.  ``Closedness of orbits in a space with $SU(2)$ Poisson structure.''{\em International Journal of Modern Physics A}, {\bfseries 29} no.~17 (2014) 1450081.

\bibitem{MFK} Mirahmadi, M. S., Fatollahi, A. H., and  Khorrami, M.  ``The similarity of attractive and repulsive forces on a lattice,'' {\em Modern Physics Letters A}, {\bfseries 30} no.~23 (2015) 1550112.


\bibitem{kupriyanov2024classical}
V.~Kupriyanov, M.~Kurkov, and A.~Sharapov, ``Classical mechanics in noncommutative spaces: confinement and more,'' {\em The European Physical Journal C} {\bfseries 84} no.~10, (2024) 1068.

\bibitem{basilio2025charged}
B.~S. Basilio, V.~G. Kupriyanov, and M.~A. Kurkov, ``{Charged particle in Lie--Poisson electrodynamics},'' {\em The European Physical Journal C} {\bfseries 85} no.~2, (2025) 175.

\bibitem{vonNeumann1942}
J.~von Neumann, ``Approximative properties of matrices of high finite order,'' {\em Portugaliae mathematica} {\bfseries 3} no.~1, (1942) 1--62.

\bibitem{CIT-006}
R.~M. Gray, ``{Toeplitz and Circulant Matrices: A Review},'' \href{http://dx.doi.org/10.1561/0100000006}{{\em {Foundations and Trends in Communications and Information Theory}} {\bfseries 2} no.~3, (2006) 155--239}.

\bibitem{rieffel1994quantization}
M.~A. Rieffel, ``{Quantization and $C^*$-algebras},'' {\em Contemporary Mathematics} {\bfseries 167} (1994) 67--67.

\bibitem{hawkins2008obstruction}
E.~Hawkins, ``An obstruction to quantization of the sphere,'' {\em Communications in mathematical physics} {\bfseries 283} (2008) 675--699.

\bibitem{hegerfeldt1998instantaneous}
G.~C. Hegerfeldt, ``{Instantaneous spreading and Einstein causality in quantum theory},'' {\em Annalen der Physik} {\bfseries 510} no.~7-8, (1998) 716--725.

\bibitem{damianou2011characteristic}
P.~A. Damianou, ``{On the characteristic polynomial of Cartan matrices and Chebyshev polynomials},'' {\em arXiv preprint arXiv:1110.6620} (2011) .

\bibitem{Minakshisundaram_Pleijel_1949}
S.~Minakshisundaram and A.~Pleijel, ``{Some Properties of the Eigenfunctions of The Laplace Operator on Riemannian Manifolds},'' \href{http://dx.doi.org/10.4153/CJM-1949-021-5}{{\em {Canadian Journal of Mathematics}} {\bfseries 1} no.~3, (1949) 242–256}.


\bibitem{LR}George~S.~Rinehart, ``{Differential forms on general commutative algebras},''
{\em Trans. Amer. Math. Soc.} {\bfseries 108} (1963) 195-222.


\bibitem{gutt1983explicit}
S.~Gutt, ``{An explicit $\ast$-product on the cotangent bundle of a Lie group},'' {\em Letters in Mathematical Physics} {\bfseries 7} no.~3, (1983) 249--258.

\bibitem{tounsi2003integral}
K.~Tounsi, ``{Integral formulae and Kontsevich star products on the cotangent bundle of a Lie group},'' {\em Italian journal of pure and applied mathematics} no.~14, (2003) 137--147.

\bibitem{domanski2021deformation}
Z.~Doma{\'n}ski, ``Deformation quantization on the cotangent bundle of a Lie group,'' {\em Journal of Mathematical Physics} {\bfseries 62} no.~3, (2021) .

\bibitem{Berr2001PositivePO}
R.~Berr and T.~W{\"o}rmann, ``Positive polynomials on compact sets,'' {\em Manuscripta Mathematica} {\bfseries 104} (2001) 135--143.

\bibitem{book:92008691}
{Alexander Prestel, Charles N. Delzell (auth.)}, {\em {Positive Polynomials: From Hilbert’s 17th Problem to Real Algebra}}.
\newblock Springer Monographs in Mathematics. Springer, 1~ed., 2001.

\bibitem{scheiderer2024course}
C.~Scheiderer, {\em A course in real algebraic geometry}.
\newblock Springer, 2024.

\bibitem{book:100944507}
K.~Schmüdgen, {\em The Moment Problem}.
\newblock Graduate Texts in Mathematics 277. Springer, 1st ed. 2017~ed., 2017.

\bibitem{helton2004positivstellensatz}
J.~Helton and S.~McCullough, ``A positivstellensatz for non-commutative polynomials,'' {\em Transactions of the American Mathematical Society} {\bfseries 356} no.~9, (2004) 3721--3737.

\bibitem{schmudgen2008noncommutative}
K.~Schm{\"u}dgen, ``Noncommutative real algebraic geometry some basic concepts and first ideas,'' in {\em Emerging applications of algebraic geometry}, pp.~325--350.
\newblock Springer, 2008.

\end{thebibliography}
 \end{document}